\renewcommand{\theequation}{\thesection\arabic{equation}}
\def\modq {\pmod{q}}
\newcommand{\reva}[1]{{\color{red} #1}}
\renewcommand{\reva}[1]{{#1}}
\newcommand{\revb}[1]{{\color{blue} #1}}
\renewcommand{\revb}[1]{{#1}}
\newtheorem{theorem}{Theorem}
\newtheorem{lemma}{Lemma}
\newtheorem{corollary}{Corollary}
\theoremstyle{definition}
\newtheorem{definition}{Definition}
\newtheorem{example}{Example}
\newtheorem{remark}{Remark}
\begin{document}

%%%%%%%%%%%%%%%%%%%%%%%%%%%%%%%%%%%%%%%%%%%%%%%%%%%%%%%%%%%%%%%%%%%%%%%%%%%%%%%%%%%%%%%%%%%%%%%%%%%%%%%%%%%%%%%%%%%%%%%%%%%%
%%%%%%%%%%%%%%%%%%%%%%%%%%%%%%%%%%%%%%%%%%%%%%%%%%%%%%%%%%%%%%%%%%%%%%%%%%%%%%%%%%%%%%%%%%%%%%%%%%%%%%%%%%%%%%%%%%%%%%%%%%%%

\renewcommand{\baselinestretch}{2}

\markright{ \hbox{\footnotesize\rm Statistica Sinica
%{\footnotesize\bf 24} (201?), 000-000
}\hfill\\[-13pt]
\hbox{\footnotesize\rm
%\href{http://dx.doi.org/10.5705/ss.20??.???}{doi:http://dx.doi.org/10.5705/ss.20??.???}
}\hfill }

\markboth{\hfill{\footnotesize\rm LIN WANG AND HONGQUAN XU} \hfill}
{\hfill {\footnotesize\rm A CLASS OF MULTILEVEL NONREGULAR DESIGNS} \hfill}

\renewcommand{\thefootnote}{}
$\ $\par

%%%%%%%%%%%%%%%%%%%%%%%%%%%%%%%%%%%%%%%%%%%%%%%%%%%%%%%%%%%%%%%%%%%%%%%%%%%%%%%%%%%%%%%%%%%%%%%%%%%%%%%%%%%%%%%%%%%%%%%%%%%%

\fontsize{12}{14pt plus.8pt minus .6pt}\selectfont \vspace{0.8pc}
\centerline{\large\bf A CLASS OF MULTILEVEL NONREGULAR DESIGNS}
\vspace{2pt} \centerline{\large\bf FOR STUDYING QUANTITATIVE FACTORS}
\vspace{.4cm} \centerline{Lin Wang$^1$, \ Hongquan Xu$^2$} \vspace{.4cm}
\centerline{\it $^1$The George Washington University}
\centerline{\it and $^2$University of California, Los Angeles}
\vspace{.55cm} \fontsize{9}{11.5pt plus.8pt minus
.6pt}\selectfont

%%%%%%%%%%%%%%%%%%%%%%%%%%%%%%%%%%%%%%%%%%%%%%%%%%%%%%%%%%%%%%%%%%%%%%%%%%%%%%%%%%%%%%%%%%%%%%%%%%%%%%%%%%%%%%%%%%%%%%%%%%%%

\begin{quotation}
\noindent {\it Abstract:}
%{\bf Contents of the Abstract.}\\
Fractional factorial designs are widely used for designing screening experiments. Nonregular fractional factorial designs can have better properties than regular designs, but their construction is challenging. Current research on the construction of nonregular designs focuses on two-level designs. We provide a novel class of multilevel nonregular designs by permuting levels of regular designs. We develop a theory illustrating how levels can be permuted without computer search and accordingly propose a sequential  method for constructing nonregular designs. Compared to regular designs, these nonregular designs can provide more accurate estimations on factorial effects and more efficient screening for experiments with quantitative factors. We further explore the space-filling property of the obtained designs and demonstrate their superiority.

\vspace{9pt}
\noindent {\it Key words and phrases:}
Generalized minimum aberration, geometric isomorphism, level permutation, orthogonal array, regular design, Williams transformation.
\par
\end{quotation}\par

\def\thefigure{\arabic{figure}}
\def\thetable{\arabic{table}}

\renewcommand{\theequation}{\thesection.\arabic{equation}}

\fontsize{12}{14pt plus.8pt minus .6pt}\selectfont

\setcounter{section}{0} %***
\setcounter{equation}{0} %-1

\lhead[\footnotesize\thepage\fancyplain{}\leftmark]{}\rhead[]{\fancyplain{}\rightmark\footnotesize\thepage}%Put this line in Page 2

\section {Introduction}
Screening experiments are commonly designed to investigate the controlled factors and identify important ones.
Fractional factorial designs are very suitable for screening experiments because they allow the investigation of many factors simultaneously with a small number of runs.
These designs are classified into two broad types: regular designs and nonregular designs.  Designs that can be constructed
through defining relations among factors are called regular designs, {while all other designs are nonregular.} There are many more nonregular designs than regular designs. Good nonregular designs can either fill the gaps between regular designs in terms of various run sizes or provide better estimation for factorial effects.

The construction of good nonregular designs is important and challenging. {Constructions for two-level nonregular designs include
\cite{plackett1946design}, \cite{deng2002design},   \cite{XuDeng2005}, \cite{fang2007effective},  \cite{phoa2009quarter}, among others.}
While numerous constructions are available for two-level designs, these designs are not able to provide information on quadratic or high-order factorial effects.
Multilevel designs with three or more levels are in high demand in many scientific and engineering fields \reva{such as many recent studies on drug combination experiments \citep{ding2013use, jaynes2013, silva2016output, clemens2019artificial}} because these designs provide the capability of studying complex factorial effects and interactions. They are also flexible on designing the number of levels for factors, without the strict restriction with \reva{Latin hypercube designs (LHDs)} that the number of levels has to be the same as the run size.
%	Multilevel designs are also applied extensively in scientific research for studying quantitative factors; see \cite{silva2016output} and \cite{clemens2019artificial} who applied multilevel orthogonal designs in order to identify the optimal combinatorial therapy of tuberculosis using six drugs with five dosages each.
Nevertheless, constructions for multilevel nonregular designs rarely exist \citep{xu2009recent}.
This is because the number of multilevel nonregular designs is huge so that providing an efficient algorithm for searching the design space is super challenging. A systematic construction also seems impossible without a unified mathematical description.

This paper provides a class of multilevel nonregular designs by manipulating nonlinear level permutations on regular designs. While linear level permutations have been studied by \cite{cheng2001factor}, \cite{xu2004optimal}, and \cite{ye2007optimal} for three-level designs, and by \cite{tang2014permuting} to improve properties of regular designs, nonlinear level permutations have not been studied.
Note that linearly permuted regular designs can be still considered as regular because they are just cosets of regular designs and share the same defining relationship.
We consider a nonlinear level permutation via the Williams transformation,
which was first used by \cite{williams1949experimental} to construct balanced Latin square designs, followed by \cite{butler2001optimal} and \cite{wang2018optimal} to construct orthogonal or maximin LHDs. Our purpose is different from theirs.
We provide a class of nonregular designs by manipulating nonlinear level permutations on regular designs via the Williams transformation and develop a general theory on the obtained designs. Using the theory, we propose a sequential construction method that efficiently constructs good designs in terms of the minimum $\beta$-aberration criterion, a criterion that assesses multilevel designs.
%The construction provide better designs than regular designs and linearly permuted regular designs in terms of the minimum $\beta$-aberration criterion.
We further explore the space-filling property of the obtained designs and demonstrate their superiority.

The paper is organized as follows. Section 2 introduces the minimum $\beta$-aberration criterion and generates a class of nonregular designs via the Williams transformation. Section 3 presents our main theoretical results. Based on the theory, in Section 4 we propose a sequential construction method and compare the constructed designs with available designs.
In Section 5, we consider the application of the constructed designs. Section 6 concludes the paper and all proofs are deferred to the Appendix.

%\section{Generating Designs via the Williams Transformation}
\section{Notation, Background and Definitions}
Let $Z_q=\{0, \ldots, q-1\}$. A $q$-level design  $D=(x_{ij})$ with $N$ runs and $n$ factors is an $N \times n$ matrix over $Z_q$ where each column corresponds to a factor.
\reva{Let $p_{0}(x)\equiv1$ and $p_{j}(x)$ for $j=1,\ldots,q-1$ be an orthonormal polynomial of order $j$ defined on $Z_{q}$ satisfying
$$
\sum_{x=0}^{q-1}p_i(x)p_j(x)=\left\{
                               \begin{array}{ll}
                                 0, & \hbox{$i\neq j$;} \\
                                 q, & \hbox{$i=j$.}
                               \end{array}
                             \right.
$$
The set $\{p_0(x), p_1(x), \ldots, p_{q-1}(x)\}$ is called an orthonormal polynomial basis.
}

Multilevel designs are often used for studying quantitative factors by fitting response surface models such as polynomial models.
A commonly accepted principle for polynomial models is that effects of a lower polynomial order are more important than effects of a higher polynomial order, while effects of the same polynomial order are regarded as equally important.
Based on this principle, \cite{cheng2004geometric} proposed the {\em minimum $\beta$-aberration} criterion for selecting multilevel designs.
\reva{For a $q$-level design $D=(x_{ij})$ with $N$ runs and $n$ factors}, define
\begin{equation}\label{eqbeta}
\beta_{k}(D)=N^{-2}\sum_{\|u\|_{1}=k}\left|\sum_{i=1}^{N}\prod_{j=1}^{n}p_{u_{j}}(x_{ij})\right|^2 \mbox{ for }k=1,\ldots,K,
\end{equation}
where {$u=(u_1, \ldots, u_n) \in Z_q^n$,} $\|u\|_1=u_1+\cdots+u_n$ and $K=n(q-1)$.
The vector $(\beta_1(D),\ldots,\beta_K(D))$ is called the $\beta$-wordlength pattern of $D$ and each $\beta_k$ measures the overall aliasing between $j$th- and $(k-j)$th-order polynomial terms for all $j$ with $0\leq j\leq k$.
%Specifically, $\beta_1$ measures the aliasing between the intercept and linear terms, $\beta_2$ measures the aliasing between linear and linear terms,  $\beta_3$ measures the aliasing between linear and second-order terms, and $\beta_4$ measures the aliasing between second-order and second-order terms, as well as the aliasing between linear and third-order terms.
\reva{The minimum $\beta$-aberration criterion is to sequentially minimize $\beta_k$ for $k=1,2,\ldots,K$. Because linear and second-order terms are more important than higher-order terms, the sequential minimization of $\beta_1,\ldots,\beta_4$ would be adequate for choosing designs in practice. \cite{tang2014permuting} and \cite{lin2017minimum} provided statistical justification and additional insights regarding  minimum $\beta$-aberration designs.

The minimum $\beta$-aberration criterion is an extension of the minimum $G_2$-aberration criterion \citep{tang1999minimum} for two-level designs, but is different from the generalized minimum aberration criterion \citep{xu2001generalized} for multi-level designs with qualitative factors.}

%Let $Z_q=\{0, \ldots, q-1\}$. A $q$-level design  $D=(x_{ij})$ with $N$ runs and $n$ factors is an $N \times n$ matrix over $Z_q$ where each column corresponds to a factor.
{For $x\in Z_q$}, the Williams transformation is defined by
\begin{equation}\label{eq6}
W(x)=\left\{
             \begin{array}{ll}
               2x, & \hbox{for $0\leq x<q/2$;} \\
               2(q-x)-1, & \hbox{for $q/2\leq x<q$.}
             \end{array}
           \right.
\end{equation}
The Williams transformation is a permutation of \revb{$Z_q$.}
For a design $D=(x_{ij})$, let $W(D)=(W(x_{ij}))$.
The following example shows that we can get better designs from the Williams transformation.

\begin{example}\label{25run}\rm
Consider a 5-level regular design $D$ with three columns $x_1$, $x_2$ and $x_3=x_1+x_2\pmod{5}$.
By \eqref{eqbeta}, $\beta_1(D)=\beta_2(D)=0$, $\beta_3(D)=0.125$, and $\beta_4(D)=0.525$.
For each $b=0,\ldots,4$, we obtain two designs via linear permutations and the Williams transformation, namely, $D_b$ with columns $x_1$, $x_2$ and $x_3=x_1+x_2+b\pmod{5}$ and $E_b=W(D_b)$. It can be verified that all $D_b$'s and $E_b$'s have $\beta_1=\beta_2=0$.
Table \ref{tab0} shows their $\beta_3$ and $\beta_4$. The best design from $D_b$'s is $D_3$ with $\beta_3=0$ and $\beta_4=0.686$, while the best design from $E_b$'s is $E_4$ with $\beta_3=0$ and $\beta_4=0.027$.
Design $E_4$ performs much better than $D_3$ under the minimum $\beta$-aberration criterion, although they are both better than the original design $D$.
\end{example}
\begin{table}[t!]
\caption{The $\beta$-wordlength pattern of $D_b$ and $E_b$ in Example \ref{25run}. \label{tab0}}
\vskip .2cm
\centerline{\tabcolsep=5truept\begin{tabular}{|rrrrrrr|} \hline
$b$&& $\beta_3(D_{b})$ &  $\beta_4(D_{b})$  &&   $\beta_3(E_{b})$  & $\beta_4(E_{b})$\\\hline
0 && 0.125 & 0.525 && 0.442 & 0.004 \\
1 && 0.125 & 0.525 && 0.168 & 0.021 \\
2 && 0.125 & 0.096 && 0.168 & 0.021 \\
3 && 0.000 & 0.686 && 0.442 & 0.004 \\
4 && 0.125 & 0.096 && 0.000 & 0.027 \\
\hline \end{tabular}}
\end{table}

\begin{remark}\rm
In the computation of $\beta_k$ defined in \eqref{eqbeta}, \reva{the orthonormal polynomials for a 5-level factor} are $p_0(x)=1$, $p_1(x)=(x-2)/\sqrt{2}$,  $p_2(x)=\sqrt{10/7}\{p_1(x)^2-1\}$,
$p_3(x)=\{10p_1(x)^3-17p_1(x)\}/6$, and $p_4(x)=\{70p_1(x)^4-155p_1(x)^2+36\}/\sqrt{14}$.
\end{remark}

Example 1 shows that from a regular design, we can obtain a series of nonregular designs via linear permutations and the Williams transformation.
This series of designs can provide better designs than the original regular design and linearly permuted designs.
%the Williams transformation can generate better designs than merely doing linear permutations.
%Inspired by this, we propose to permute regular designs with both the Williams transformation and linear permutations.

Generally, for a prime number $q$, a regular $q^{n-m}$ \reva{design $D$} has $n-m$ independent columns, denoted as $x_{1},\ldots,x_{n-m}$, and $m$ dependent columns, denoted as $x_{n-m+1},\ldots,x_{n}$, which can be specified by $m$ generators as
\begin{equation}\label{eq31}
x_{n-m+i}=c_{i1}x_{1}+\cdots+c_{i(n-m)}x_{n-m} \pmod{q}, \mbox{ for } i=1,\ldots,m,
\end{equation}
where each vector $(c_{i1},\ldots,c_{i(n-m)})$ is a generator whose entries are constants in $Z_q$.
%A regular $q^{n-m}$ design is fully defined by its generators.
For each regular $q^{n-m}$ \reva{design  $D$} and $b=(b_1,\ldots,b_m)\in Z_q^m$,  let
\begin{equation}\label{eq2}
D_b=(x_1,\ldots,x_{n-m},x_{n-m+1}+b_1,\ldots,x_n+b_m)\modq,
\end{equation}
and
\begin{equation}\label{eqeb}
E_b=W(D_b).
\end{equation}
Note that we only consider permutations for dependent columns in \eqref{eq2} because linearly permuting one or more independent columns is equivalent to linearly permuting some dependent columns, which can be seen from \eqref{eq31}.
\revb{Throughout the paper, all additions between columns of a design are subject to the modulus $q$, the number of levels of the design, as in \eqref{eq31} and \eqref{eq2}. We omit the notation $\modq$ for such operations when no confusion is introduced.}
From each regular $q^{n-m}$ design $D$, we can derive $q^m$ $D_b$'s and $q^m$ $E_b$'s.
To find the best design,  one can search over all possible permutations $b\in Z_q^m$, which is, however, cumbersome and even infeasible in many cases. In the next section we develop a theory to determine the best $E_b$ without computer search.

%For three-level designs, the class of designs $E_b$'s are geometrically isomorphic to the class of designs $D_b$'s, because any three-level design obtained from any nonlinear level permutations is geometrically isomorphic to a regular design or its coset \citep{tang2014permuting}.
%Two designs are said to be geometrically isomorphic if one can be obtained from the other by {row and column exchanges and possibly reversing the level order of some columns}.  Geometrically isomorphic designs have the same $\beta_k$ values for all $k$ \citep{cheng2004geometric}.
%However, with more than three levels, the two classes of designs, $D_b$'s and $E_b$'s, are not geometrically isomorphic.

{For $q=3$, the two classes of designs, $D_b$'s and $E_b$'s, always have the same $\beta$-wordlength patterns because they are geometrically isomorphic \citep{cheng2004geometric}.
However, with more than three levels, their performances are pretty different under the minimum $\beta$-aberration criterion. \cite{tang2014permuting} studied the class of $D_b$'s.  As we have seen in Example \ref{25run}, the class of $E_b$'s can provide many better designs than the class of $D_b$'s.}

\section{Theoretical Results}\label{sth}
We study  properties of $E_b$ in this section. It is well known that a regular design $D$ is an orthogonal array of strength $t\geq2$.
An orthogonal array is a design in which all $q^t$ level combinations appear equally often in every submatrix formed by $t$ columns. The $t$ is called the strength of the orthogonal array, which is often omitted when $t=2$.
Because the Williams transformation is a permutation of $\{0,\ldots,q-1\}$, if  $D=(x_{ij})$ is a $q$-level orthogonal array, then $W(D)=(W(x_{ij}))$ is still an orthogonal array.
The following result  is from \cite{tang2014permuting}.

\begin{lemma}\label{lem4}
 For an orthogonal array of strength $t$, $\beta_k=0$ for $k=1,\ldots,t$.
\end{lemma}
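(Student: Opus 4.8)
The plan is to read off the result directly from the definition \eqref{eqbeta} by showing that each inner sum vanishes whenever $\|u\|_{1}=k\le t$; this forces every term in the sum defining $\beta_k$ to be zero. First I would fix a multi-index $u=(u_1,\ldots,u_n)$ with $\|u\|_{1}=k$ and introduce its support $S=\{j:u_j\neq0\}$. Since each nonzero $u_j$ is at least $1$, the cardinality of the support satisfies $|S|\le\|u\|_{1}=k\le t$. This bound is the hinge of the whole argument: it guarantees that the number of "active" columns never exceeds the strength.

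Next I would simplify the product. Because $p_0\equiv1$, every factor with $j\notin S$ is identically $1$, so $\prod_{j=1}^{n}p_{u_j}(x_{ij})=\prod_{j\in S}p_{u_j}(x_{ij})$, an expression involving only the $|S|$ columns indexed by $S$. Now I would invoke the orthogonal array property. Since $D$ has strength $t$ and $|S|\le t$, the $N\times|S|$ submatrix formed by the columns in $S$ contains each level combination in $Z_q^{|S|}$ exactly $N/q^{|S|}$ times. Summing the collapsed product over all runs therefore factors as
$$
\sum_{i=1}^{N}\prod_{j\in S}p_{u_j}(x_{ij})=\frac{N}{q^{|S|}}\prod_{j\in S}\left(\sum_{a=0}^{q-1}p_{u_j}(a)\right).
$$
For each $j\in S$ we have $u_j\ge1$, so the orthonormality relation with $p_0\equiv1$ gives $\sum_{a=0}^{q-1}p_{u_j}(a)=\sum_{a=0}^{q-1}p_{u_j}(a)p_0(a)=0$. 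When $k\ge1$ the support $S$ is nonempty, so at least one factor in the product is zero and the entire right-hand side vanishes. Hence the inner sum is zero for every $u$ with $\|u\|_{1}=k\le t$, and consequently $\beta_k=0$.

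I do not expect a serious obstacle; the argument is essentially bookkeeping built on the equireplication property of the array. The only point that genuinely needs care is the inequality $|S|\le k$, which is what aligns the count of active columns with the strength $t$ and thereby licenses both the factorization of the sum across the submatrix and the application of the orthogonality of $p_{u_j}$ against $p_0$.
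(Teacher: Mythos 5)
Your proof is correct and complete: the reduction to the support $S$ with $|S|\le\|u\|_{1}=k\le t$, the equireplication of level combinations over the columns in $S$, and the orthogonality $\sum_{a=0}^{q-1}p_{u_j}(a)p_0(a)=0$ for $u_j\ge1$ are exactly the ingredients needed. The paper itself gives no proof of this lemma (it is quoted from Tang and Xu, 2014), and your argument is the standard one underlying that reference, so there is nothing to reconcile.
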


From the construction in \eqref{eqeb},  $E_b$ is an orthogonal array of the same strength as $D$ and $D_b$.
While we use designs of strength 2 in practice,
Lemma \ref{lem4} guarantees $\beta_1(E_b)=\beta_2(E_b)=0$ so that linear terms are not aliased with the intercept, nor with each other.
Then we want to minimize $\beta_3(E_b)$ in order to minimize the aliasing between linear and second-order {terms. } The following theorem gives a permutation $b$ theoretically to ensure $\beta_3(E_b)=0$ so that no aliasing exists between any linear and second-order terms.

\begin{theorem}\label{th1}
For an odd prime $q$, let
\begin{equation}\label{eqgamma}
\gamma=W^{-1}((q-1)/2)=\left\{
         \begin{array}{ll}
           (q-1)/4, & \hbox{if $q=1 \pmod{4}$;} \\
           (3q-1)/4, & \hbox{if $q=3\pmod{4}$.}
         \end{array}
       \right.
\end{equation}
Let $D$ be a regular $q^{n-m}$ design generated by \eqref{eq31}, and $E_b$ be defined by \eqref{eqeb}. Then  $\beta_3(E_{b^*})=0$ with  $b^*=(b^*_1, \ldots, b^*_m)$, where
\begin{equation}\label{eq1}
b_i^* = \left(1-\sum_{j=1}^{n-m}c_{ij} \right) \gamma   ~~~(i=1,\ldots,m).
\end{equation}
\end{theorem}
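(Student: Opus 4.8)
The plan is to show that each generalized character sum contributing to $\beta_3(E_{b^*})$ vanishes term by term. By \eqref{eqbeta} it suffices to prove that for every multi-index $u=(u_1,\ldots,u_n)$ with $\|u\|_1=3$,
\[
S(u):=\sum_{i=1}^{N}\prod_{j=1}^{n} p_{u_j}\big((E_{b^*})_{ij}\big)=0,
\]
since all entries and polynomials are real, so $\beta_3(E_{b^*})=N^{-2}\sum_{\|u\|_1=3}S(u)^2$. First I would exploit the regular structure: the independent columns $x_1,\ldots,x_{n-m}$ run over the full factorial $Z_q^{\,n-m}$, so the sum over runs equals the sum over all $(x_1,\ldots,x_{n-m})\in Z_q^{\,n-m}$, where by \eqref{eq31}, \eqref{eq2} and \eqref{eqeb} each entry of $E_{b^*}$ is $W(x_j)$ on an independent column and $W\big(\sum_j c_{ij}x_j+b_i^*\big)$ on the $i$th dependent column.

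Next I would set up two structural facts and write $g_\ell:=p_\ell\circ W$. First, the orthonormal polynomials have definite parity about the center $(q-1)/2$: because the defining inner product is invariant under $y\mapsto(q-1)-y$ and this reflection negates $y-(q-1)/2$, Gram--Schmidt forces $p_\ell((q-1)-y)=(-1)^\ell p_\ell(y)$. Second, the Williams transformation intertwines reflection about $\gamma$ with reflection about $(q-1)/2$, namely $W\big((2\gamma-x)\modq\big)=(q-1)-W(x)$ for all $x$. I would prove this from the explicit inverse of $W$ by checking $W^{-1}(v)+W^{-1}(q-1-v)\equiv 2\gamma\modq$, splitting on the parity of $v$ and using $2\gamma\equiv(q-1)/2\modq$, which holds in both residue cases of \eqref{eqgamma}. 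Combining the two facts yields the working identity $g_\ell(2\gamma-x)=(-1)^\ell g_\ell(x)$.

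Finally I would apply the sign-reversal substitution $x_j\mapsto(2\gamma-x_j)\modq$ simultaneously on every independent column; this is a bijection of $Z_q^{\,n-m}$, hence leaves $S(u)$ unchanged. On an independent column it contributes a factor $(-1)^{u_j}$. On the $i$th dependent column the argument becomes $2\gamma\sum_j c_{ij}-\sum_j c_{ij}x_j+b_i^*$, and the precise choice $b_i^*=(1-\sum_j c_{ij})\gamma$ in \eqref{eq1} is exactly what makes this equal to $2\gamma-(\sum_j c_{ij}x_j+b_i^*)$, so the working identity applies and produces the factor $(-1)^{u_{n-m+i}}$. Multiplying over all columns scales the summand by $(-1)^{\|u\|_1}=(-1)^3=-1$, whence $S(u)=-S(u)=0$ and therefore $\beta_3(E_{b^*})=0$.

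The main obstacle is establishing the reflection identity for $W$ and, crucially, verifying that the shift vector $b^*$ of \eqref{eq1} is precisely the value that closes up the reflection on the dependent columns; solving the congruence $2b_i^*\equiv 2\gamma(1-\sum_j c_{ij})\modq$ uses that $q$ is odd so that $2$ is invertible, which also explains the parity-of-$q$ split in \eqref{eqgamma}. I note in passing that the same argument gives $S(u)=(-1)^{\|u\|_1}S(u)$, so it in fact forces $\beta_k(E_{b^*})=0$ for every odd $k$, with $\beta_3$ being the case relevant to controlling contamination of second-order effects on the linear main effects.
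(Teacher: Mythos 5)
Your proposal is correct, and it takes a genuinely different route from the paper. The paper's proof is analytic: it expands $p_1$ in a discrete Fourier--cosine series (Lemmas~\ref{lem5} and \ref{lem6}), observes that the Williams transformation converts $\cos\{(2v+1)\pi(W(x)+0.5)/q\}$ into $\cos\{(2v+1)\pi(2x+0.5)/q\}$, uses Lemma~\ref{lem3} to reduce $E_{b^*}$ to $W(D+\gamma)$ so that each factor becomes a sine of $2(2v+1)\pi y/q$, and then kills the sum $S(y,v)$ using the fact that the regular design $D$ is a linear space closed under $y\mapsto -y$. Your proof is algebraic/combinatorial: you establish the reflection identity $W\bigl((2\gamma-x)\bmod q\bigr)=(q-1)-W(x)$ (which is exactly the identity $W^{-1}(q-1-v)=2\gamma-W^{-1}(v)$ the paper proves inside Theorem~\ref{th4}), check that \eqref{eq1} is precisely the shift making the dependent columns respect this reflection, and conclude by the parity $p_\ell((q-1)-y)=(-1)^\ell p_\ell(y)$ of the orthonormal polynomials. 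In effect you prove the mirror-symmetry of $E_{b^*}$ (Theorem~\ref{th4}) first and derive $\beta_3=0$ as a corollary, obtaining $\beta_k(E_{b^*})=0$ for all odd $k$ in one stroke; your argument also transparently disposes of the mixed-degree words $u$ (e.g.\ one index $2$ and one index $1$) that the paper's display \eqref{eq4} suppresses by implicitly invoking strength~$2$. What your route does not buy is the quantitative Fourier formula for $\beta_3(W(D_e))$ with $e\neq 0$, which the paper needs for the uniqueness statement in Theorem~\ref{th2}; that is the real payoff of the heavier machinery in the paper's proof of Theorem~\ref{th1}.
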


\begin{example}\label{exa4}\rm
Consider a $7^{3-1}$ design $D$ with $x_3=x_1+x_2$. Then $\gamma=(3\times7-1)/4=5$, and equation \eqref{eq1} gives $b_1^*=2$. It can be verified that $\beta_3(E_{2})=0$ and $\beta_4(E_{2})=0.003$. Consider another $7^{3-1}$ design $D$ with $x_3=2x_1+2x_2$. Then $\gamma=5$, and equation \eqref{eq1} gives $b_1^*=6$. It can be verified that $\beta_3(E_{6})=0$ and $\beta_4(E_{6})=0.0196$.
\end{example}

%\begin{example}\label{exa4}\rm
%Consider a $5^{3-1}$ design $D$ with $x_3=x_1+2x_2$. The $\gamma=(5-1)/4=1$, and \eqref{eq1} gives $b_1^*=3$. It can be verified that  $\beta_3(E_{b^*})=0$ and $\beta_4(E_{b^*})=0.491$.
%\end{example}

%\begin{example}\label{exa4}\rm
%Consider a $5^{5-2}$ design $D$ with $x_4=x_1+x_2$ and $x_5=x_1+x_2+2x_3$.
%%Consider the recursive $5^{5-2}$ design $D$ in Example \ref{exa3} with $x_4=x_1+x_2$ and $x_5=x_1+x_2+2x_3$.
%The $\gamma=(5-1)/4=1$, and \eqref{eq1} gives $b_1^*=4$ and $b_2^*=2$.
%Then $\beta_3(E_{b^*})=0$ and $\beta_4(E_{b^*})=0.520$.
%%%The $D_b$ contains columns $x_1,x_2,x_3,x_4+4$, and $x_5+3.$
%%The $E_{b^*}$ is the unique design with $\beta_3=0$ among all $E_b$'s, and $\beta_4(E_{b^*})=0.520$.
%\end{example}

Theorem \ref{th1} states that given a regular design $D$, we can always find an $E_{b^*}$ such that $\beta_3(E_{b^*})=0$. In the following, we give a sufficient condition for the $E_{b^*}$ to be the unique design with $\beta_3=0$ among all possible $q^m$ $E_b$'s.

\begin{definition}\label{def1}
Let $D$ be a regular $q^{n-m}$ design. If there exist $n-m$ independent columns of $D$, $z_{1},\ldots,z_{n-m}$, and a series of $s+1$ sets of columns, $T_0\subset\cdots\subset T_s$, such that $T_0=\{z_{1},\ldots,z_{n-m}\}$,
\begin{equation}\label{eq7}
T_{k+1}=T_{k}\cup\{ {w\in D: w=c_1w_1+c_2w_2\modq, w_1,w_2\in T_k}, c_1,c_2\in Z_q\}
\end{equation}
for $k=0,\ldots,s-1$,
and $T_s=D$, then $D$ is called recursive. Furthermore, if either $c_1$ or $c_2$ is restricted to $1$ or $-1$ in  \eqref{eq7} for all $k$, then $D$ is called ordinary-recursive; if both $c_1$ and $c_2$ are resticted to $1$ or $-1$  in \eqref{eq7} for all $k$, then $D$ is called simple-recursive.
\end{definition}

%Clearly, {a simple-recursive design is a ordinary-recursive design, which in turn is recursive.}

\begin{example}\label{exa2}\rm
Consider the $7^{3-1}$ design $D$ defined by $x_3=2 x_1+2 x_2$ in Example \ref{exa4}. % It is recursive by taking $T_0=\{x_1,x_2\}$ and $T_1=\{x_1,x_2,x_3\}=D$.
Clearly, $D$ is recursive. Because $-1=6\pmod 7$, we have $2 x_1+2 x_2+6x_3=0$,  $x_1+x_2+3x_3=0$ and $x_2=-x_1+4x_3$.
Then $D$ is also ordinary-recursive, if we take $T_0=\{x_1,x_3\}$
 and $T_1=\{x_1,x_2,x_3\}=D$. However,  $D$ is not simple-recursive.
\end{example}

\begin{example}\label{exa3}\rm
Consider a $5^{5-2}$ design $D$ with $x_4=x_1+x_2$ and $x_5=x_1+x_2+x_3$.
%Consider the design $D$ given in Example \ref{exa4}.
Take $T_0=\{x_1,x_2,x_3\}$, $T_1=\{x_1,x_2,x_3,x_4\}$ and $T_2=\{x_1,x_2,x_3,x_4,x_5\}=D$, then $D$ is simple-recursive.
If $x_5=x_1+x_2+2x_3$ instead, then $D$ is ordinary-recursive but not simple-recursive.
Consider another $5^{5-2}$ design $D$ with $x_4=x_1+x_2$ and $x_5=x_1+2x_2+2x_3$. This design is not recursive because $x_5$ is not involved in any word of length three. However, when one more column $x_6=x_1+2x_2$ is added, it is ordinary-recursive.
\end{example}

Regular designs with $q^2$ runs are commonly used in practice because they are economical and guarantee that linear terms are uncorrelated. Those designs accommodate two independent columns and up to $q-1$ dependent columns.
By Definition \ref{def1}, they are all recursive by letting $T_0$ include the two independent columns and $T_1=D$.
\begin{lemma}\label{lem1}
Let $q$ be an odd prime and $D$ be a regular design of $q^2$ runs. Then $D$ is recursive.
\end{lemma}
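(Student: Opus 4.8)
The plan is to read off the claim directly from the structure of a $q^2$-run regular design together with the permissiveness of type-III recursion, in which the coefficients $c_1,c_2$ in \eqref{eq7} are unrestricted. First I would note that since $D$ has $q^2=q^{n-m}$ runs, necessarily $n-m=2$, so by the description preceding \eqref{eq31} the design is generated by two independent columns, say $x_1$ and $x_2$, forming a full factorial, while each of the $m=n-2$ dependent columns is by \eqref{eq31} a linear combination $x_i=c_{i1}x_1+c_{i2}x_2 \pmod{q}$ with $c_{i1},c_{i2}\in Z_q$. Because $q$ is an odd prime, $Z_q$ is a field, so these two columns span every column of $D$; this spanning property is the only structural fact the argument needs.

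The key step is to take $T_0=\{x_1,x_2\}$ and verify that a single application of the recursion \eqref{eq7} already recovers all of $D$. For any dependent column $x_i$, choosing $w_1=x_1$ and $w_2=x_2$ (both in $T_0$) together with $c_1=c_{i1}$ and $c_2=c_{i2}$ in $Z_q$ puts $x_i\in T_1$. Since $x_1,x_2\in T_0\subseteq T_1$ as well, every column of $D$ lies in $T_1$, and as $T_1\subseteq D$ by construction we obtain $T_1=D$. Taking $s=1$ in Definition \ref{def1}, with $c_1,c_2$ ranging freely over $Z_q$, then certifies that $D$ is type-III recursive, which is exactly the claim.

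In truth there is no serious obstacle here: the proof is immediate once one recognizes that the two generating columns realize every column of $D$ through unrestricted coefficients, so the chain collapses to the single inclusion $T_0\subset T_1=D$. The point worth flagging is instead conceptual, namely why this same one-step argument does \emph{not} establish the stronger notions. Restricting $c_1$ (and possibly $c_2$) to $\pm1$, as type-II and type-I require, permits only a handful of the $q^2$ coefficient pairs to be generated directly from $T_0$, so for those notions the passage to $T_1=D$ can fail and a genuinely longer chain $T_0\subset\cdots\subset T_s$ may be forced. This is precisely the distinction that the remark following the lemma, and Examples \ref{exa2} and \ref{exa3}, are meant to highlight, and it explains why the lemma is phrased for type-III specifically.
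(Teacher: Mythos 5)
Your proof is correct and matches the paper's own argument exactly: the paper justifies Lemma~\ref{lem1} by taking $T_0$ to be the two independent columns and observing that one application of \eqref{eq7} with unrestricted coefficients $c_1,c_2\in Z_q$ yields every dependent column via its generator in \eqref{eq31}, so $T_1=D$. Your added remarks on why the same one-step argument fails for types I and II are a fair gloss but not part of the claim.
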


Clearly, recursive designs include ordinary-recursive designs, which in turn include simple-recursive designs.
For three-level designs, the three types of designs are equivalent, while for designs with more than three levels, they are dramatically different.
%Now we compare the number of recursive designs of the three types.
Table \ref{tab5} compares the numbers of the three types of designs with $25$ and $49$ runs.
%By Lemma \ref{lem1}, all these designs are recursive.
%Table \ref{tab5} lists the numbers of the three types of recursive designs with $25$ and $49$ runs.
The numbers of simple-recursive designs are much smaller than the numbers of the other two types of designs.
%Only a small amount of regular designs are simple-recursive.
Although there is a difference between the numbers of ordinary-recursive and recursive designs, the difference is small.
As the number of columns increases, all designs tend to be ordinary-recursive.

\begin{table}[t!]
\caption{The numbers of the three types of recursive designs with 25 and 49 runs. \label{tab5}}
\vskip .2cm
\centerline{\tabcolsep=4truept\begin{tabular}{|ccccccccc|} \hline
   && \multicolumn{3}{c}{25-run designs}   & &  \multicolumn{3}{c|}{49-run designs}\\
$n$&&   simple & ordinary  & recursive &&   simple & ordinary  & recursive\\\hline
3 &&    2&  6 &  8 &&    2 & 10  & 18\\
4 &&   6 & 22 & 24 &&    6 & 99  & 135\\
5 &&  20 & 32 & 32 &&   20 & 517 & 540\\
6 &&  16 & 16 & 16 &&   70 & 1214& 1215\\
7 &&     &    &    &&  252 & 1458& 1458\\
8 &&     &    &    &&  267 & 729 & 729\\
\hline\end{tabular}}
\end{table}

The next theorem gives a sufficient condition for the $E_{b^*}$ to be the unique design with $\beta_3=0$ among all possible $q^m$ $E_b$'s.
\begin{theorem}\label{th2}
{For an odd prime $q$, let $D$ be a regular $q^{n-m}$ design defined by \eqref{eq31}, and $E_{b}$ be defined as \eqref{eqeb}.
If $D$ is ordinary-recursive}, then $E_{b^*}$ with $b^*$ defined in \eqref{eq1} is the only design with $\beta_3=0$ among all $q^m$ $E_b$'s derived from $D$.
%(i) For a ordinary-recursive $q^{n-m}$ design $D$, $E_{b^*}$ with $b^*$ defined in \eqref{eq1} is the unique design with $\beta_3=0$ among all $E_b$'s.
%(ii) For a recursive $q^{n-m}$ design $D$ with $q\leq13$, $E_{b^*}$ with $b^*$ defined in \eqref{eq1} is the unique design with $\beta_3=0$ among all $E_b$'s.
\end{theorem}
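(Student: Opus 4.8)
The plan is to reduce the vanishing of $\beta_3(E_b)$ to a separate, explicit condition for each length-three word of $D$, to solve each such condition, and then to glue the solutions together along the recursive construction.

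\emph{Step 1: reduce to words of length three.} Write $h(x)=p_1(W(x))$ and let $\theta_j$ denote the shift applied to column $j$ in $E_b$, so that $\theta_j=0$ for an independent column and $\theta_{n-m+i}=b_i$ for a dependent one. In the sum defining $\beta_3(E_b)$, only the triples of distinct factors with three entries equal to $1$ can contribute: for a strength-two orthogonal array the one-factor terms (a single $u_j=3$) vanish because each column is balanced and $\sum_x p_3(x)=0$, and the two-factor terms (some $u_j=2,\ u_{j'}=1$) vanish because any two columns of a regular design form a full factorial while $\sum_x p_1(x)=\sum_x p_2(x)=0$; moreover any three \emph{linearly independent} columns of a regular design also form a full factorial, so their degree-$(1,1,1)$ term factorizes and vanishes too. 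Hence $\beta_3(E_b)=0$ if and only if, for every length-three word $x_c=c_1x_a+c_2x_b$ (all coefficients nonzero),
\[
G(\Delta)=\sum_{s,t\in Z_q} h(s)\,h(t)\,h(c_1 s+c_2 t+\Delta)=0,\qquad \Delta=\theta_c-c_1\theta_a-c_2\theta_b,
\]
where I have used the full-factorial structure of $\{x_a,x_b\}$ to reduce the run sum to a double sum over $Z_q^2$; note that $G$ depends on the shifts only through the single combination $\Delta$.

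\emph{Step 2: the reflection symmetry and the root $\Delta^*$.} Next I would record the symmetry $W(\gamma+y)+W(\gamma-y)=q-1$ for all $y$, immediate from \eqref{eq6} and \eqref{eqgamma}, which says exactly that $h(\gamma+y)=-h(\gamma-y)$, i.e.\ $h$ is odd about $\gamma$. Substituting $s=\gamma+\sigma$, $t=\gamma+\tau$, $\Delta=(1-c_1-c_2)\gamma+\rho$ into $G$ and replacing $(\sigma,\tau)$ by $(-\sigma,-\tau)$ then yields $G((1-c_1-c_2)\gamma+\rho)=-G((1-c_1-c_2)\gamma-\rho)$. Thus $G$ is antisymmetric about $\Delta^*=(1-c_1-c_2)\gamma$; in particular $G(\Delta^*)=0$, which re-proves Theorem \ref{th1} since $\Delta^*$ is precisely the value produced by \eqref{eq1}, and the uniqueness claim becomes the statement that $G(\Delta)\neq0$ for all $\Delta\neq\Delta^*$.

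\emph{Step 3 (main obstacle): the unique-root lemma for $c_1=\pm1$.} The crux, and the only place the type-II hypothesis enters, is to show that when $c_1=\pm1$ the antisymmetric function $G$ has no zero besides $\Delta^*$. Here I would carry out the inner summation over the column carrying the unit coefficient: for $c_1=1$ this rewrites $G(\Delta)=\sum_t h(t)\,R(c_2 t+\Delta)$ with $R(u)=\sum_s h(s)h(s+u)$ the autocorrelation of $h$, and the case $c_1=-1$ reduces to the same kernel after a reflection justified by the oddness of $h$ about $\gamma$. The essential feature is that $R$ is positive definite, its finite Fourier coefficients being $|\hat h(k)|^2\ge0$, so that $G$ is the cross-correlation of the odd function $h(\gamma+\cdot)$ against this peaked, positive-definite kernel; the goal is then to prove that such a cross-correlation changes sign exactly once on $Z_q$, forcing $\Delta^*$ to be its only root. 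I expect this to be the hard part: for a general coefficient $c_1$ the kernel ceases to be a square, the associated character sum can pick up additional integer roots, and uniqueness may genuinely fail — which is exactly why the hypothesis is confined to type-II (Definition \ref{def1}) rather than type-III.

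\emph{Step 4: linearize and propagate.} Granting the unique-root lemma, I would pass to the shifted variables $\tilde\theta_j=\theta_j-\gamma$. For each word arising in the recursive construction (where $c_1=\pm1$) the condition $G(\Delta)=0$ forces $\Delta=\Delta^*$, which in shifted variables reads $\tilde\theta_c=c_1\tilde\theta_a+c_2\tilde\theta_b$, mirroring the design relation $x_c=c_1x_a+c_2x_b$. Suppose now $E_b$ and $E_{b^*}$ both satisfy $\beta_3=0$ and set $\delta_j=\theta_j-\theta_j^*$; then $\delta_j=0$ on the original independent columns, while $\delta_c=c_1\delta_a+c_2\delta_b$ on every recursion word. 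Iterating along $T_0\subset\cdots\subset T_s=D$ expresses each $\delta_w$ as a fixed linear combination of the values $\{\delta_{z_i}\}_{i=1}^{n-m}$ on the starting independent set. Imposing $\delta=0$ on the original independent columns, which form a basis of the column space, gives $n-m$ homogeneous equations in the $\delta_{z_i}$ with an invertible change-of-basis matrix, so all $\delta_{z_i}=0$ and hence $\delta\equiv0$, i.e.\ $b=b^*$. This identifies $E_{b^*}$ as the unique $E_b$ with $\beta_3=0$; the sole nontrivial ingredient is the unique-root lemma of Step 3.
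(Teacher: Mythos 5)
Your Steps 1, 2 and 4 are sound and in fact organize the argument more cleanly than the paper does: the reduction of $\beta_3$ to the length-three words, the observation that $h=p_1\circ W$ is odd about $\gamma$ (equivalently $W(\gamma+y)+W(\gamma-y)=q-1$), the resulting antisymmetry $G(\Delta^*+\rho)=-G(\Delta^*-\rho)$ with $\Delta^*=(1-c_1-c_2)\gamma$ (which re-derives Theorem \ref{th1}), and the propagation of the constraint $\tilde\theta_c=c_1\tilde\theta_a+c_2\tilde\theta_b$ along $T_0\subset\cdots\subset T_s$ are all correct and match the skeleton of the paper's proof, which likewise isolates a single word $z_3=c_1z_1+c_2z_2$ with $c_1=\pm1$ carrying a nonzero net shift $e_0$ and shows its contribution to $\beta_3$ is positive.

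The genuine gap is Step 3: the ``unique-root lemma'' is the entire content of the theorem beyond bookkeeping, and you do not prove it --- you state it as a goal and offer a heuristic (cross-correlation of an odd function against the positive-definite autocorrelation kernel $R$ ``changes sign exactly once'') that does not follow from positive-definiteness alone and, as you yourself note, must fail for general $c_1$ since uniqueness breaks for $q=17$ with $x_3=2x_1+4x_2$ (Remark \ref{rm2}). Any correct argument must therefore exploit $c_1=\pm1$ quantitatively, and your sketch never does. The paper closes this step by an explicit Fourier computation: Lemma \ref{lem6} expands $p_1(W(x))$ as $-(\rho/2q)\sum_{v}g(v)\cos\{(2v+1)\pi(2x+0.5)/q\}$ with $g(v)=\cos(\pi(v+0.5)/q)/\sin^2(\pi(v+0.5)/q)$; product-to-sum identities plus the full-factorial structure of $(z_1,z_2)$ annihilate all but four $(v_1,v_2)$ pairs for each $v_3$ (this is exactly where $c_1=\pm1$ enters, via $t_1=t_4=0\pmod q$), collapsing $\beta_3(W(d))$ to the single closed-form sum
\[
\frac{\rho^6}{16q^6}\Bigl|\sum_{v_3=0}^{(q-1)/2}(-1)^{v_3c_2}g(v_{20})\,g(v_3)^2\sin\bigl\{2\pi(2v_3+1)e_0/q\bigr\}\Bigr|^2,
\]
whose nonvanishing for $e_0=1,\dots,q-1$ is then argued from the rapid decay of $g(v_3)^2$. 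Until you supply an argument of comparable force for why $G(\Delta)\neq0$ off $\Delta^*$ when $c_1=\pm1$, the proposal proves only Theorem \ref{th1}, not Theorem \ref{th2}.
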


\revb{In fact, we can show that if $D$ has no greater than 13 levels, the result of Theorem \ref{th2} can be extended beyond ordinary-recursive designs. That is,
we have the following more general result for $q\leq13$.

\begin{theorem}\label{thadd}
For a recursive $q^{n-m}$ design $D$, if $q$ is an odd prime and $q\leq13$, the $E_{b^*}$ with $b^*$ defined in \eqref{eq1} is the only design with $\beta_3=0$ among all $E_b$'s derived from $D$.
\end{theorem}
Theorem \ref{thadd} is not true for $q\geq17$. A counter example for $q=17$ comes with a $17^{3-1}$  design with $x_3=2x_1+4x_2$. By \eqref{eq1}, $b^*=14$. Then $E_{14}$ has $\beta_3=0$, while the design $E_4$ with columns $x_1,x_2$, and $x_3+4$ also has zero $\beta_3$. That being said, as the number of columns increases, the number of non-ordinary-recursive regular designs decreases dramatically so Theorem \ref{th2} works for most recursive designs with many columns. }

%\begin{example}\label{exa5}\rm
%Consider the design $D$ given in Example \ref{exa4}. We showed in Example \ref{exa2} that $D$ is ordinary-recursive.
%For each $b_1=0,\ldots,6$, Table \ref{tab1} shows $\beta_3(E_b)$. The $E_{b^*}$ is the unique design with $\beta_3=0$ among all $E_b$'s.
%\end{example}
%\begin{table}
%\def~{\hphantom{0}}
%\tbl{The $\beta_3(E_b)$'s in Example \ref{exa5}}{%
%\begin{tabular}{rrrrrrrr}
%%\\
% \\
%$b_1$& 0 & 1 & 2 & 3 & 4 & 5 & 6 \\
%$\beta_3(E_b)$ & 0.0009 & 0.0031 & 0.0047 & 0.0047 & 0.0031 & 0.0009 & ~~~~0 \\
%\end{tabular}}
%\label{tab1}
%\end{table}

\begin{example}\label{exa7}\rm
Consider a $7^{8-6}$ design $D$ with $x_3=x_1+x_2, x_4=x_1+2x_2, x_5=x_1+4x_2, x_6=x_1+5x_2, x_7=2x_1+5x_2$, and $x_8=2x_1+6x_2$. There are $7^6=117,649$ $E_b$'s derived from $D$, which makes it cumbersome, if not impossible, to do an exhaustive search for the best $E_b$.  Note that $x_7=x_1+x_6$, $x_8=x_3+x_6$. So  $D$ is ordinary-recursive by taking $T_0=\{x_1,x_2\}$, $T_1=\{x_1,\ldots,x_6\}$ and $T_2=\{x_1,\ldots,x_8\}=D$. Equation  \eqref{eq1} gives $b_1^*=2, b_2^*=4, b_3^*=1, b_4^*=3, b_5^*=5,$ and $b_6^*=0$. It can be verified that  $\beta_3(E_{b^*})=0$ and $\beta_4(E_{b^*})=9.677$. By Theorem \ref{th2},  $E_{b^*}$ is the best design among all $E_b$'s derived from $D$ under the minimum $\beta$-aberration criterion.
\end{example}

\revb{By Theorems \ref{th2} and \ref{thadd},} for an ordinary-recursive design or a recursive design with no more than 13 levels, $E_{b^*}$ is the best design among all $E_b$'s, which is obtained without any computer search.
{Theorem \ref{th2} does not apply to the class of linearly permuted designs $D_b$'s. Here is a counter example.
\begin{example}\label{exa6}\rm
Consider the design $7^{3-1}$ design $D$ defined by $x_3=2x_1+2x_2$ in Example~\ref{exa4}. Example \ref{exa2} shows that it is ordinary-recursive,  % so by Theorem \ref{th2}, $E_{b^*}$ is the unique design with $\beta_3=0$ among all $E_b$'s. In contrast, there are three $D_b$'s with zero $\beta_3$.
{but there are three $D_b$'s with zero $\beta_3$. Specifically, it is easy to verify that $\beta_3(D_b)=0$ for $b=0,3,5$. }
% Their $\beta_4(D)$ values are 0.0417, 0.0417, and 0.0625, respectively.
%Equation \eqref{eqdb} gives $\tilde{b}=5$, which leads to $D_{\tilde{b}}$ with $\beta_3=0$ and $\beta_4=0.0625$.
%Other than this, both $b=0$ and $b=3$ lead to $D_b$ with $\beta_3=0$ and $\beta_4=0.0417$.
% All $D_b$'s are worse than  $E_{b^*}$ under the minimum $\beta$-aberration criterion.
\end{example}
In fact, \cite{tang2014permuting} showed that if $D$ is simple-recursive, the design $D_{\tilde{b}}$ given by
\begin{equation}\label{eqdb}
\tilde{b}_i = \left(1-\sum_{j=1}^{n-m}c_{ij} \right) (q-1)/2   ~~~(i=1,\ldots,m).
\end{equation}
is the unique design with $\beta_3=0$ among all $D_b$'s.
As we have shown above, only a small amount of regular designs are simple-recursive.
Therefore, results on simple-recursive designs are usually not applicable for designs with more than three levels.
In contrast, Theorem \ref{th2} is more general and applies to the broader classes of ordinary-recursive and recursive designs.
}

\revb{Theorem \ref{thadd} and Lemma \ref{lem1} indicate} the following result.

\begin{corollary}\label{co1}
For an odd prime $q\leq13$, let $D$ be a regular design of $q^2$ runs. Then $E_{b^*}$ with $b^*$ defined as \eqref{eq1} is the unique design with $\beta_3=0$ among all $E_b$'s derived from $D$.
\end{corollary}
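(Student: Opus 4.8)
The plan is to read off the corollary directly from Lemma~\ref{lem1} together with the uniqueness result in Theorem~\ref{th2} and its extension recorded in Remark~\ref{rm2}; no fresh computation is needed beyond verifying that the hypotheses align. The point is that the corollary lives in exactly the regime---$q^2$-run designs with $q$ an odd prime and $q\le 13$---where both the structural statement (type-III recursiveness) and the corresponding uniqueness statement are already available.

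First I would invoke Lemma~\ref{lem1}: since $q$ is an odd prime, every regular design $D$ of $q^2$ runs is type-III recursive. Indeed, such a $D$ has exactly two independent columns; taking $T_0$ to be the set of these two columns and $T_1=D$ satisfies the recursion \eqref{eq7} with no restriction imposed on $c_1$ or $c_2$, so $D$ meets Definition~\ref{def1} for the type-III case. This settles the structural input.

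Next I would apply the uniqueness conclusion. Theorem~\ref{th2} establishes that $E_{b^*}$, with $b^*$ as in \eqref{eq1}, is the unique design with $\beta_3=0$ among the $q^m$ designs $E_b$ whenever $D$ is type-II recursive. A $q^2$-run design need not be type-II recursive, but Remark~\ref{rm2} states that the same uniqueness conclusion persists for type-III recursive designs provided $q\le 13$, the sharp threshold being $q=17$, where the stated $17^{3-1}$ counterexample shows uniqueness can fail. Since $D$ is type-III recursive and $q\le 13$, this extension applies verbatim, yielding that $E_{b^*}$ is the unique design with $\beta_3=0$ among all $E_b$'s, which is the assertion of the corollary.

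At the level of the corollary itself there is no substantive obstacle: the argument is bookkeeping, and all of the genuine work has already been carried out in Lemma~\ref{lem1} and in the proof underlying Remark~\ref{rm2}. The only point requiring care is matching the level restriction---the corollary's bound $q\le 13$ is precisely the range in which Remark~\ref{rm2} guarantees that type-III recursive designs behave like type-II recursive ones, so that the $q\ge 17$ failure illustrated in Remark~\ref{rm2} is excluded and the uniqueness conclusion is valid.
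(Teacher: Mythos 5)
Your argument is exactly the paper's own: the authors derive Corollary~\ref{co1} by combining Lemma~\ref{lem1} (every regular $q^2$-run design is type-III recursive) with the extension of Theorem~\ref{th2}'s uniqueness claim to type-III recursive designs for $q\le 13$ recorded in Remark~\ref{rm2}. The proposal is correct and matches the paper's reasoning step for step.
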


Now we show another useful property of $E_{b^*}$.
A design $D$ over $Z_{q}$ is called mirror-symmetric if $(q-1)J-D$ is the same design as $D$, where $J$ is a matrix of unity.  Mirror-symmetric designs  include two-level foldover designs as special cases.

\begin{theorem}\label{th4}
For an odd prime $q$, let $D$ be a regular $q^{n-m}$ design defined by \eqref{eq31}, and $E_{b}$ be defined as \eqref{eqeb}.
Then $E_{b^*}$ with $b^*$ defined in \eqref{eq1} is mirror-symmetric.
\end{theorem}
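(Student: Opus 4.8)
The plan is to reduce the mirror-symmetry of $E_{b^*}=W(D_{b^*})$ to a symmetry of the coset $D_{b^*}$ by first understanding how the mirror map $y\mapsto(q-1)-y$ interacts with the Williams transformation. Write $h=(q-1)/2$ and let $\tau(x)=h-x\modq$. The first step is to establish the mirror identity
\begin{equation*}
(q-1)-W(x)=W(\tau(x))\quad\text{for all }x\in Z_q,
\end{equation*}
which I would prove directly from \eqref{eq6} by splitting into the ranges $0\le x\le h$ (where $W(x)=2x$) and $h+1\le x\le q-1$ (where $W(x)=2(q-x)-1$); in each case $\tau(x)$ lands in the complementary range and a short computation gives the claim. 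Note that $\tau$ is an affine involution on $Z_q$.

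The second step records that $\gamma$ is the fixed point of $\tau$. Since $W(\gamma)=(q-1)/2$ by \eqref{eqgamma}, evaluating the mirror identity at $x=\gamma$ gives $W(\tau(\gamma))=(q-1)-W(\gamma)=W(\gamma)$, so $\tau(\gamma)=\gamma$ by injectivity of $W$; equivalently $2\gamma\equiv h\modq$. Combined with the definition \eqref{eq1}, this yields the calibration
\begin{equation*}
2b_i^*\equiv h\Bigl(1-\sum_{j=1}^{n-m}c_{ij}\Bigr)\modq\qquad(i=1,\ldots,m),
\end{equation*}
which is the arithmetic fact that makes everything fit.

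The third step transfers the problem from $E_{b^*}$ to $D_{b^*}$. Applying the mirror identity entrywise and using that the same permutation $W$ is applied to every column, I get $(q-1)J-E_{b^*}=W(\tau(D_{b^*}))$, where $\tau$ acts entrywise. Because $W$ is a bijection of $Z_q$, applying it columnwise is a bijection on rows, so $(q-1)J-E_{b^*}$ and $E_{b^*}$ are the same design if and only if $\tau(D_{b^*})$ and $D_{b^*}$ are the same design. It therefore suffices to show that $\tau$ maps the run set of $D_{b^*}$ onto itself.

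The final step verifies this via a change of variables on the independent columns. A run of $D_{b^*}$ is determined by $(a_1,\ldots,a_{n-m})\in Z_q^{n-m}$ through \eqref{eq2} and \eqref{eq31}; writing $a_j'=h-a_j=\tau(a_j)$ gives a bijection of $Z_q^{n-m}$, so the independent part of $\tau(D_{b^*})$ is again the full factorial. For each dependent column I would check that the $\tau$-image $h-\bigl(\sum_j c_{ij}a_j+b_i^*\bigr)$ equals the dependent value $\sum_j c_{ij}a_j'+b_i^*$ of the run of $D_{b^*}$ indexed by $(a_1',\ldots,a_{n-m}')$; after substituting $a_j'=h-a_j$ and cancelling $\sum_j c_{ij}a_j$, this reduces exactly to the calibration from the second step. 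Hence $\tau$ sends each run of $D_{b^*}$ to a run of $D_{b^*}$, and being a bijection it permutes the run set, so $\tau(D_{b^*})=D_{b^*}$ and $E_{b^*}$ is mirror-symmetric. I expect the only delicate points to be the case analysis in the mirror identity of the first step and keeping the index bookkeeping straight in the final substitution; the conceptual core is simply that $b^*$ is tuned so that the affine involution $\tau$ preserves the coset.
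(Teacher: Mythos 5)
Your proposal is correct and follows essentially the same route as the paper: both reduce mirror-symmetry of $E_{b^*}$ to closure of the coset $D_{b^*}$ under the affine involution $x\mapsto 2\gamma-x$ (your $\tau$, since $2\gamma\equiv(q-1)/2\pmod q$), via the identity $(q-1)-W(x)=W(2\gamma-x)$, which the paper states in the equivalent form $W^{-1}(q-1-x)=2\gamma-W^{-1}(x)$; the paper then obtains the closure from Lemma \ref{lem3} together with the linearity of $D$, while you verify it by substituting into the generators using $2b_i^*\equiv h(1-\sum_j c_{ij})$, which amounts to the same calculation. One small slip: in your first step $\tau(x)$ actually stays in the \emph{same} branch of the Williams transformation (each half of $Z_q$ is preserved by $\tau$), not the complementary one, but the identity itself is correct and the case-by-case computation still goes through.
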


 {\cite{tang2014permuting} showed that} a design is mirror-symmetric if and only if it has $\beta_k=0$ for all odd $k$.
By Theorem \ref{th4}, the $E_{b^*}$ has $\beta_k(E_{b^*})=0$ for all odd $k$. This guarantees that {odd-order terms are not aliased with any even-order term. Specifically, linear terms are not aliased with any even-order term. }

\section{Construction Method and Design Comparisons}

Based on our theoretical results, we propose a sequential method for constructing multilevel nonregular designs. For simplicity, we focus on designs with $q^2$ runs although the method and results apply for general $q^{n-m}$ designs.
% the apply our theoretical results to construct designs with $q^2$ runs and compare our designs with regular designs and linearly permuted regular designs. % Designs with $q^2$ runs are widely used in practice due to their run size economy.
A regular fractional factorial design with $q^2$ runs has two independent columns, denoted as $x_1$ and $x_2$, and can accommodate up to $(q-1)$ dependent columns each of which is generated by $c_1x_1+c_2x_2$ with $c_1,c_2\in \{1,\ldots,q-1\}$.
%To obtain a design $E_{b^*}$ with $q^2$ runs and $n$ columns, we search over generators $(c_1,c_2)$ for the dependent columns sequentially. The $n$th column of $E_{b^*}$, $n=3,\ldots,q+1$,
Then the first two columns of $E_{b^*}$ are $W(x_1)$ and $W(x_2)$, respectively. To obtain $n\geq 3$ columns,
we add columns to $E_{b^*}$ sequentially by searching over generators $(c_1,c_2)$. Each new column is generated by
$W(c_1x_1+c_2x_2+b^*)$ where $b^*=(1-c_1-c_2)\gamma$ with $\gamma$ defined in \eqref{eqgamma} and $(c_1,c_2)$ minimizes $\beta_4(E_{b^*})$, that is,
$$
(c_1,c_2)=\arg\min_{(c_1,c_2)}\beta_4(E_{b^*}).
$$
The last three columns of Tables \ref{tab2}--\ref{tab121run} show the generators of the {added columns as well as the $\beta$-wordlength patterns of the obtained $E_{b^*}$.}

\begin{table}[t!]
\caption{Comparison of $\beta$-wordlength patterns for 25-run designs with 5 levels.}\label{tab2}
\vskip .2cm
\centerline{\tabcolsep=4truept\begin{tabular}{|lllllclllcll|} \hline
 && \multicolumn{2}{c}{$D$}  &&  \multicolumn{3}{c}{$D_{\tilde{b}}$}   & &  \multicolumn{3}{c|}{$E_{b^*}$}\\
$n$   &&  $\beta_3$ &  $\beta_4$  &&   Generators  &  $\beta_3$  & $\beta_4$ &&    Generators  &  $\beta_3$  & $\beta_4$ \\\hline
3   && 0.125 & 0.525   &&  (1,2)      &0~~  &0.271 && (1,1) &0~~ &0.027\\
4   && 0.375 & 1.361   &&  (2,1)      &0  &1.336 && (1,2) &0 &1.037\\
5   && 0.750 & 3.029   &&  (1,4)      &0  &3.793 && (1,3) &0 &3.768\\
6   && 1.250 & 6.786   &&  (1,1)      &0  &8.250 && (2,3) &0 &8.250\\
\hline
\end{tabular}}
\end{table}

To see the merit of $E_{b^*}$'s, we compare them with {commonly used regular designs and the class of $D_{\tilde{b}}$'s. The commonly used regular design \citep{mukerjee2006modern}, denoted by $D$,  consists of  the first $n$ columns of }
\begin{equation}\label{eq:q^2}
x_1, x_2, x_1+x_2,x_1+2x_2,x_1+3x_2, \ldots, x_1+(q-1)x_2.
\end{equation}
The design $D_{\tilde{b}}$ is obtained sequentially similar to the {generation} of $E_{b^*}$. The only difference is that the added column of $D_{\tilde{b}}$ is $c_1x_1+c_2x_2+\tilde{b}$ where $\tilde{b}=(1-c_1-c_2)(q-1)/2$.
Tables \ref{tab2}--\ref{tab121run} show the comparisons of such obtained designs $D$, $D_{\tilde{b}}$, and $E_{b^*}$ with 25 runs, 49 runs, and 121 runs, respectively.
We can see that the $E_{b^*}$ always performs the best for any design size.
%{The $E_{b^*}$ {given in Tables \ref{tab2} and \ref{tab3}} is optimal under the minimum $\beta$-aberration criterion within the class of $E_b$'s. }

\begin{table}[t!]
\caption{Comparison of $\beta$-wordlength patterns for 49-run designs with 7 levels.}\label{tab3}
\vskip .2cm
\centerline{\tabcolsep=4truept\begin{tabular}{|lllllclllcll|} \hline
 && \multicolumn{2}{c}{$D$}  &&  \multicolumn{3}{c}{$D_{\tilde{b}}$}   & &  \multicolumn{3}{c|}{$E_{b^*}$}\\
$n$   && $\beta_3$ &  $\beta_4$  &&   Generators &  $\beta_3$  & $\beta_4$ &&    Generators &  $\beta_3$  & $\beta_4$ \\\hline
3   && 0.063  &0.563   &&  (2,3)       &0  &0.063 && (1,1)     &0 &0.003\\
4   && 0.188  &1.354   &&  (1,4)       &0  &0.313 && (3,5)     &0 &0.055\\
5   && 0.375  &2.440   &&  (2,5)       &0  &1.135 && (3,6)     &0 &0.836\\
6   && 0.625  &4.313   &&  (1,2)       &0  &3.094 && (2,5)     &0 &2.368\\
7   && 0.938  &7.401   &&  (2,2)       &0  &6.438 && (2,6)     &0 &4.928\\
8   && 1.312  &12.78   &&  (2,6)   &0  &11.23 && (2,3)     &0 &9.677\\
\hline
\end{tabular}}
\end{table}

\begin{table}
	\caption{Comparison of $\beta$-wordlength patterns for 121-run designs {with 11 levels}. \label{tab121run}}
\vskip .2cm
\centerline{\tabcolsep=4truept\begin{tabular}{|lllllclllcll|}
\hline
&& \multicolumn{2}{c}{$D$}  &&  \multicolumn{3}{c}{$D_{\tilde{b}}$}   & &  \multicolumn{3}{c|}{$E_{b^*}$}\\
$n$   && $\beta_3$ &  $\beta_4$  &&   Generators &  $\beta_3$  & $\beta_4$ &&    Generators &  $\beta_3$  & $\beta_4$ \\\hline
			3   && 0.025  &0.585   &&  (2,4)       &0  &0.010 && (1,1)     &0 &0.0002\\
			4   && 0.075  &1.388   &&  (4,2)       &0  &0.055 && (2,4)     &0 &0.005\\
			5   && 0.150  &2.350   &&  (5,3)       &0  &0.281 && (4,2)     &0 &0.015\\
			6   && 0.250  &3.629   &&  (3,5)       &0  &0.710 && (2,9)     &0 &0.031\\
			7   && 0.375  &5.274   &&  (4,7)       &0  &1.466 && (2,8)     &0 &0.637\\
			8   && 0.525  &7.682  &&  (1,3)     &0  &3.152 && (5,3)     &0 &1.308\\
			9  && 0.700  &11.07   &&  (2,8)       &0  &5.519 && (4,10)     &0 &3.572\\
			10   && 0.900  &15.82   &&  (3,3)       &0  &8.891 && (1,7)     &0 &5.864\\
			11   && 1.125  &22.26   &&  (1,7)       &0  &13.49 && (5,1)     &0 &9.896\\
			12   && 1.375  &31.29   &&  (4,10)   &0  &19.65 && (5,4)     &0 &14.44\\
		\hline \end{tabular}}
\end{table}

\begin{figure}
  \centering
  \caption{Plot of $Mm_s$ (the larger the better) against $s$ for four designs: $D$ (circle), $D_{\tilde{b}}$ (cross), $E_{b^*}$ (square), the maximum-projection design (triangle), and the collapsed maximum-projection design (plus).}\label{Mms}
  \includegraphics[width=.6\textwidth]{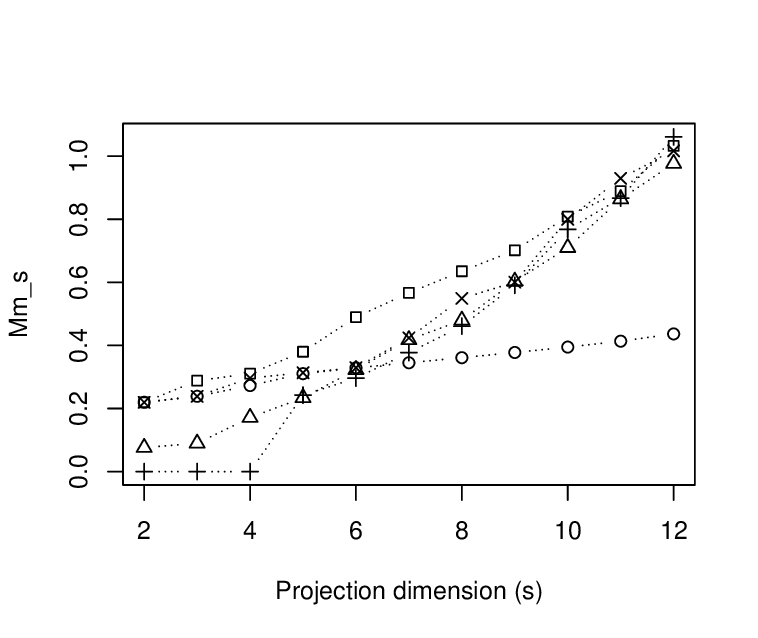}
\end{figure}

To illustrate the merit of the obtained designs $E_{b^*}$,
we further examine their space-filling property. For an $N\times n$ design, we consider the maximin measure in all projection dimensions, which is given by
$$
Mm_s=\min_{r=1,\ldots,\binom{n}{s}}\left\{ \frac{1}{\binom{N}{2}} \sum_{i=1}^{N-1}\sum_{j=i+1}^{N}\frac{1}{d_{ij,sr}^{2s}} \right\} ^{-1/(2s)},\mbox{ for }s=1,\ldots,n,
$$
where $d_{ij,sr}$ is the Euclidean distance between the $i$th and $j$th design points in the $r$th projection of
dimension $s$. Design points are scaled to $[0,1]^n$ to apply this measure, \revb{that is, the $j$th column is obtained by $x_j/(q-1)$.}
The measure was proposed in \cite{joseph2015maximum} to generate the so called ``maximin projection designs".
Designs with larger $Mm_s$ values are more space-filling in {$s$-dimension projections}. Figure \ref{Mms} plots the $Mm_s$ values of the $121\times 12$ designs in Table \ref{tab121run} for $s=1,\ldots, 12$. We also generate a $121\times 12$ maximum-projection LHD from R package MaxPro \citep{joseph2015maximum} and include its $Mm_s$ values in Figure \ref{Mms}. The design was claimed to be space-filling in all projected dimensions so may serve as a benchmark in the comparison. Because this design has 121 levels, we further collapse it to a 11-level design and include the $Mm_s$ values of the collapsed design in Figure \ref{Mms}.
To obtain a good maximum-projection design, the R package MaxPro is run 100 times and the best design is selected.
It takes on average 7 seconds to get a maximum-projection design. Therefore, to run the package 100 times takes about 12 minutes, whereas it takes less than a second to get \reva{any  of the other designs in the plot}. Even so,
Figure \ref{Mms} shows that $E_{b^*}$ outperforms the selected maximum-projection design and its collapsed design for all $s\leq11$ projection dimensions, although the collapsed design is marginally better than $E_{b^*}$ for the full dimension $s=12$.
Besides, $E_{b^*}$ performs better than all other designs in Figure \ref{Mms} on {projection dimension $s=2,\ldots,10$,} and is only slightly worse than $D_{\tilde{b}}$ when $s=11$. The good performance of $E_{b^*}$ comes from its zero $\beta_3$ and smaller $\beta_4$ values. %Note that $D_{\tilde{b}}$ also has zero $\beta_3$ values so it performs as well as, if not better than, the maximin-projection design.

We also {examine} such comparisons for designs of other sizes in Table \ref{tab121run} and {get} similar performance. This is because designs in Table \ref{tab121run} are obtained sequentially such that those with less than 12 columns are actually projections of the $121\times 12$ designs. Therefore, \reva{Figure~\ref{Mms} also reflects} the projection properties of designs with fewer columns.
Similar results also hold for 25-run and 49-run designs.

\section{Applications}
\label{sec:app}
%\iffalse %%
Consider applying the three 25-run designs with 3 columns and 5 levels in Table \ref{tab2} {to the} following \revb{normalized} second-order polynomial model
\begin{equation}\label{eq25}
y=\alpha_{0}+\sum_{j=1}^{3}p_{1}(x_{j})\alpha_{j}+\sum_{j=1}^{3}p_{2}(x_{j})\alpha_{jj}+
\sum_{j=1}^{2}\sum_{k=j+1}^{3}p_{1}(x_{j})p_{1}(x_{k})\alpha_{jk}+\varepsilon,
\end{equation}
where $p_{1}(x)=\sqrt{2}(x-2)/2$, $p_{2}(x)=\sqrt{5/14}\{(x-2)^2-2\}$,
% {$x_{1},x_{2},x_{3}\in Z_{5}$ are levels for the three factors},
$\alpha_{0},\alpha_{j},\alpha_{jj}$, and $\alpha_{jk}$ are the intercept, linear, quadratic and bilinear terms, respectively, and $\varepsilon\sim N(0,\sigma^2)$.
\revb{Using such a normalized model instead of a model with natural terms (i.e., terms $x_j$, $x_j^2$, and $x_jx_k$) produces orthogonality between any two linear terms and between linear and quadratic terms for an orthogonal array.}
For the regular design $D$, because $\beta_3(D)\neq0$, linear terms are aliased or correlated with bilinear terms and the model in \eqref{eq25} is indeed not estimable.
While both  $D_{\tilde{b}}$ and $E_{b^*}$ have $\beta_{1}=\beta_{2}=\beta_{3}=0$, the intercept and all the linear terms are not correlated with the quadratic and bilinear terms and so they can be estimated independently.
For either design, let $M$ denote the model matrix corresponding to the 3 quadratic and 3 bilinear terms:
$\alpha_{11},\alpha_{22},\alpha_{33},\alpha_{12},\alpha_{13}$ and $\alpha_{23}$.
%Table \ref{tab4} shows part of the information matrix {per run} $M^\mathrm{T}M/25$ corresponding to the 3 quadratic and  3 bilinear terms:
%$\alpha_{11},\alpha_{22},\alpha_{33},\alpha_{12},\alpha_{13}$ and $\alpha_{23}$ for $D_{\tilde{b}}$ and $E_{b^*}$.
%It is easy to see that the terms for $E_{b^*}$ are less correlated than that for $D_{\tilde{b}}$.
The variance-covariance matrix of the estimates of parameters for these terms is $\sigma^2(M^\mathrm{T}M)^{-1}$.
For $D_{\tilde{b}}$, the variances of the estimates for quadratic terms $\alpha_{11},\alpha_{22}$ and $\alpha_{33}$ are $0.047\sigma^2$, $0.041\sigma^2$, and $0.047\sigma^2$, respectively, and for bilinear terms $\alpha_{12},\alpha_{13}$ and $\alpha_{23}$ are $0.051\sigma^2$, $0.050\sigma^2$, and $0.051\sigma^2$, respectively.
For $E_{b^*}$, the variance of the estimate for each quadratic term is $0.040\sigma^2$, and for each bilinear term is $0.041\sigma^2$. With $E_{b^*}$, the variance of quadratic terms decreases by up to $14.9\%$ and the variance of bilinear terms decreases by up to $19.6\%$.
It can be verified that the correlations between the estimates are also smaller for $E_{b^*}$ than $D_{\tilde{b}}$.

Further, consider the bias brought by the inadequacy of polynomial terms {in model \eqref{eq25}}. Suppose there are nonnegligible third-order polynomial terms as
$$
% \sum_{\|u\|_1=3} \alpha_{u}  p_{u}(x_1,x_2,x_3).
\sum_{i+j+k=3} \alpha_{ijk} p_{i}(x_1) p_{j}(x_2) p_{k}(x_3). % p_{u}(x_1,x_2,x_3).
$$
Then the estimates of the linear parameters in model \eqref{eq25} are biased by these third-order terms.
Specifically,
%estimators from the design $D$ have
%\begin{eqnarray*}
%E(\hat{\alpha}_1) &=& \alpha_1+.42\alpha_{021}-.42\alpha_{012}+.35\alpha_{111}, \\
%E(\hat{\alpha}_2) &=& \alpha_2+.42\alpha_{201}-.42\alpha_{102}+.35\alpha_{111}, \\
%E(\hat{\alpha}_3) &=& \alpha_3+.42\alpha_{210}+.42\alpha_{120}-.35\alpha_{111};
%\end{eqnarray*}
{for the estimators from the design $D_{\tilde{b}}$, we have}
\begin{eqnarray*}
E(\hat{\alpha}_1) = \alpha_1-.12\alpha_{021}-.36\alpha_{012}+.3\alpha_{111}, \\
E(\hat{\alpha}_2) = \alpha_2+.36\alpha_{201}-.36\alpha_{102}-.1\alpha_{111}, \\
E(\hat{\alpha}_3) = \alpha_3+.36\alpha_{210}-.12\alpha_{120}-.3\alpha_{111},
\end{eqnarray*}
and {for the estimators from the design $E_{b^*}$, we have}
\begin{eqnarray*}
E(\hat{\alpha}_1) = \alpha_1+.096\alpha_{021}-.096\alpha_{012}+.08\alpha_{111}, \\
E(\hat{\alpha}_2) = \alpha_2+.096\alpha_{201}-.096\alpha_{102}+.08\alpha_{111}, \\
E(\hat{\alpha}_3) = \alpha_3+.096\alpha_{210}+.096\alpha_{120}-.08\alpha_{111}.
\end{eqnarray*}
Obviously, the design $E_{b^*}$ brings less bias to the estimators of linear terms than $D_{\tilde{b}}$.
Because $\beta_5=0$ for both designs, the estimates of second-order terms from $D_{\tilde{b}}$ and $E_{b^*}$ are not biased by third-order terms. \reva{In summary, $E_{b^*}$ is better for screening or studying quantitative factors than $D_{\tilde{b}}$ and $D_b$.  The results are general and apply to other designs in Tables  \ref{tab2}--\ref{tab121run}.}

\section{Concluding Remarks}

We provide a new class of nonregular designs via the Williams transformation and develop a theory on the property of the obtained designs.  Using the theory, we further propose a sequential method for constructing nonregular designs with minimum $\beta$-aberration.  The sequential method is fast and efficient to generate multilevel nonregular designs with large numbers of runs and factors. While two-level nonregular designs have been catalogued by some researchers, the
construction of multilevel nonregular designs was rarely studied. The approach in this paper is a pioneer work in this field. The obtained designs provide more accurate estimations on factorial effects and are more efficient than regular designs for screening quantitative factors.
%The constructed designs are easily obtained, and shown to perform better than available multilevel designs as well as LHDs.

The obtained designs can be used to generate orthogonal  LHDs which are commonly used and studied in \reva{computer experiments}.
Orthogonal LHDs have $\beta_1=\beta_2=0$ therefore guarantee the orthogonality between linear effects.
A popular construction, proposed by \cite{steinberg2006construction} and \cite{pang2009construction}, is to rotate a regular design to obtain an LHD which inherits the orthogonality from both the rotation matrix and the regular design. \cite{wang2018construction} improved the method by rotating a linearly permuted regular design, that is, the $D_{\tilde{b}}$ with $\tilde{b}$ defined in \eqref{eqdb}.
Such generated orthogonal LHDs have $\beta_3=0$ thus can guarantee that nonnegligible quadratic and bilinear effects do not contaminate the estimation of linear effects.
With the results in this paper, we can rotate the class of designs $E_{b^*}$ and obtain new orthogonal LHDs which have smaller $\beta_4$ values and inherit the good space-filling property of $E_{b^*}$. These LHDs may be good options for designing computer experiments and Gaussian processing modeling.

The Williams transformation is pairwise linear, which is probably the simplest nonlinear
transformation, yet it leads to some remarkable results such as Theorems \ref{th2} and \ref{th4}. It would
be of interest to identify and characterize other nonlinear transformations that have similar
properties. \revb{In addition, the proposed method requires the number of levels of regular designs being a prime number and does not work for, say, four-level designs. It would also be interesting to extend the method for non-prime numbers of levels.}

%{(Add conclusions from the simulations:  (i) multilevel designs are as efficient as LHDs. (ii) OALHDs, maximin, maxpro designs may not work well. (iii) It is important to consider the nonorthogonally and aliasing beyond the linear effects.)}

% The reason for the generated designs working remarkably well in Gaussian process models is still unclear. We think this may partly result from the fact that the generated design brings a correlation matrix with a larger determinant. Nevertheless, the exploration of good designs for Gaussian process models is a meaningful and challenging problem that worth plenty of further studying.
%%%%%%%%%%%%%%%%%%%%%%%%%%%%%%%%%%%%%%%%%%%%%%%%%%%%%%%%%%%%%%%%%%%%%%%%%%%%%%%%%%%%%%%%%%%%%%%%%%%%%%%%%%%%%%%%%%%%%%%%%%%%%
\vskip 14pt
\noindent {\large\bf Acknowledgements}

The authors thank two reviewers for their helpful comments.
\par

\section*{Appendix: Proofs}

We need the following lemmas for the proofs.
\begin{lemma}\label{lem3}
The $D_b$ is the same design as $D_e+\gamma \modq$, where $e=b-b^*$, $\gamma$ is defined as \eqref{eqgamma}, and $b^*$ is defined as \eqref{eq1}.
\end{lemma}
\begin{proof}
For $D_b$, permuting all columns $x_j$ to $x_j-\gamma$ for $j=1,\ldots, n$ is equivalent to keeping the independent columns unchanged while permuting the dependent columns $x_{n-m+i}+b_i$ to $x_{n-m+i}+b_i-b_i^*$ for $i=1,\ldots,m$.
Hence, $D_b-\gamma$ is the same design as $D_e$ with $e=b-b^*$. Equivalently, $D_b$ is the same design as $D_e+\gamma\modq$.
\end{proof}

\begin{lemma}\label{lem5}
If x is a real number which is not an integer, then
$$
\sum_{n=-\infty}^{\infty}\frac{(-1)^{n-1}}{(n+x)^2}=\frac{\pi^2\cos\pi x}{(\sin\pi x)^2}.
$$
\end{lemma}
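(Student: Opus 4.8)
The plan is to reduce this alternating series to the classical Euler identity
\[
\sum_{n=-\infty}^{\infty}\frac{1}{(n+x)^{2}}=\frac{\pi^{2}}{\sin^{2}\pi x},
\]
valid for every non-integer $x$, which may itself be obtained by differentiating the Mittag-Leffler expansion $\pi\cot\pi x=\sum_{n}(n+x)^{-1}$ term by term. Since the series in the statement of the lemma converges absolutely (its terms are $O(n^{-2})$), I am free to rearrange it, and the natural rearrangement is according to the parity of $n$.

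First I would split the sum into even indices $n=2k$ and odd indices $n=2k+1$. The even part is $\sum_{k}(2k+x)^{-2}=\tfrac14\sum_{k}(k+x/2)^{-2}$ and the odd part is $\sum_{k}(2k+1+x)^{-2}=\tfrac14\sum_{k}(k+(x+1)/2)^{-2}$, each a rescaled copy of Euler's identity. Applying that identity yields $\tfrac{\pi^{2}}{4\sin^{2}(\pi x/2)}$ for the even part and $\tfrac{\pi^{2}}{4\cos^{2}(\pi x/2)}$ for the odd part, using $\sin(\pi(x+1)/2)=\cos(\pi x/2)$. Collecting these with the alternating weight reduces the whole sum to $\tfrac{\pi^{2}}{4}\big(\sin^{-2}(\pi x/2)-\cos^{-2}(\pi x/2)\big)$, where the parity weight $(-1)^{n-1}$ fixes the overall sign.

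The final step is a routine trigonometric simplification: writing the bracket over a common denominator and applying the double-angle formulas $\cos^{2}u-\sin^{2}u=\cos2u$ and $\sin u\cos u=\tfrac12\sin2u$ with $u=\pi x/2$ collapses it to $\pi^{2}\cos\pi x/\sin^{2}\pi x$, the claimed closed form. I do not anticipate a genuine obstacle: the only points requiring care are the justification of the rearrangement (immediate from absolute convergence) and the bookkeeping of the sign produced by the parity weight, so the ``hard part'' is merely having Euler's identity in hand. An alternative one-line route is to differentiate the cosecant expansion $\pi/\sin\pi x=\sum_{n}(-1)^{n}/(n+x)$ term by term, landing on the same closed form; that route, however, requires justifying differentiation of a merely conditionally convergent series, whereas the parity-splitting argument keeps everything absolutely convergent and therefore seems to me the safer plan.
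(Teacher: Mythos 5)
Your argument is essentially the paper's: both proofs reduce the alternating series to Euler's identity $\sum_{n}(n+x)^{-2}=\pi^{2}/\sin^{2}\pi x$ via a parity decomposition, and the absolute-convergence justification you give for the rearrangement is the right one. The only difference is cosmetic: the paper writes the alternating sum as the full sum minus twice the even-indexed part, while you evaluate the even and odd parts separately (using $\sin(\pi(x+1)/2)=\cos(\pi x/2)$ for the odd part). The two bookkeepings are interchangeable.

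There is, however, a sign slip in your step ``collecting these with the alternating weight.'' With the weight $(-1)^{n-1}$ the odd-indexed terms enter with $+$ and the even-indexed terms with $-$, so the sum is
\[
\frac{\pi^{2}}{4}\left(\frac{1}{\cos^{2}(\pi x/2)}-\frac{1}{\sin^{2}(\pi x/2)}\right)=-\frac{\pi^{2}\cos\pi x}{(\sin\pi x)^{2}},
\]
whereas your bracket $\sin^{-2}(\pi x/2)-\cos^{-2}(\pi x/2)$ is the combination produced by the weight $(-1)^{n}$. A quick check at $x=1/4$ confirms this: the $n=0$ term alone is $-16$ and the series converges to $-\pi^{2}\sqrt{2}$, not $+\pi^{2}\sqrt{2}$. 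In other words, the identity holds as stated for $(-1)^{n}$ rather than $(-1)^{n-1}$, and your derivation reaches the stated right-hand side only because your sign slip cancels the one in the statement. You are in good company: the paper's own proof commits the identical slip in its final equality, and the version actually invoked later (in the proof of Lemma~\ref{lem6}, to evaluate $\sum_{k}(-1)^{k}/(2kq+2v+1)^{2}$) is $\sum_{k}(-1)^{k}/(k+x)^{2}=\pi^{2}\cos\pi x/(\sin\pi x)^{2}$, which is correct. So your final formula is the one the paper needs, but as a proof of the lemma as literally stated, the ``collection'' step does not follow from the given weight; either the exponent in the statement should be $n$ rather than $n-1$, or the right-hand side should carry a minus sign.
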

\begin{proof}
It is known that $\sum_{n=-\infty}^{\infty}1/(n+x)^2=\pi^2/(\sin\pi x)^2.$ Then
\begin{eqnarray*}
% \nonumber % Remove numbering (before each equation)
  \sum_{n=-\infty}^{\infty}\frac{(-1)^{n-1}}{(n+x)^2}&=&\sum_{n=-\infty}^{\infty}\frac{1}{(n+x)^2}-2\sum_{{\rm even}~n}\frac{1}{(n+x)^2}\\
&=&\frac{\pi^2}{(\sin\pi x)^2}-\frac{1}{2}\frac{\pi^2}{(\sin(\pi x/2))^2} \\
   &=& \frac{\pi^2\cos\pi x}{(\sin\pi x)^2}.
\end{eqnarray*}
\end{proof}

\begin{lemma}\label{lem6}
Let $p_1(x)=\rho[x-(q-1)/2]$ be the linear orthogonal polynomial, where $\rho=\sqrt{12/[(q+1)(q-1)]}$.
Then for $x=0,\ldots,q-1$,
$$
p_1(x) =-\frac{\rho}{2q}\sum_{v=0}^{q-1}g(v)\cos\left\{\frac{(2v+1)\pi (x+0.5)}{q}\right\}.
$$
where
\begin{equation}\label{eqg}
  g(v)=\frac{\cos(\pi (v+0.5)/q)}{ \{\sin(\pi (v+0.5)/q)\}^2}.
\end{equation}
%with $u=(v+0.5)/q$.
\end{lemma}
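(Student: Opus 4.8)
The plan is to read the right-hand side as a finite cosine sum that, after invoking Lemma~\ref{lem5}, collapses into a single classical Fourier series. Since $p_1(x)=\rho[x-(q-1)/2]$, the factor $\rho$ cancels, so it suffices to prove that
\[
S(x):=\sum_{v=0}^{q-1} g(v)\cos\Bigl\{\tfrac{(2v+1)\pi (x+0.5)}{q}\Bigr\}=q(q-1)-2qx
\]
for every integer $x\in\{0,\dots,q-1\}$. Note at the outset that the degenerate index $v=(q-1)/2$ contributes nothing, since there $g((q-1)/2)=0$ and the cosine also vanishes, so this term can be ignored throughout.

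First I would rewrite each weight $g(v)$ as an infinite series. Lemma~\ref{lem5} evaluates the alternating series $\sum_{n}(-1)^{n-1}/(n+x)^2$ in closed form as a multiple of $\cos(\pi x)/\sin^2(\pi x)$, which is exactly the shape of $g(v)$; taking $x=(v+0.5)/q$ and clearing the factor $q^2$ from $(nq+v+0.5)^2=q^2(n+(v+0.5)/q)^2$ expresses $g(v)$ as $\pm (q^2/\pi^2)\sum_{n\in\mathbb{Z}}(-1)^{n-1}/(nq+v+0.5)^2$. Substituting this into $S(x)$ turns it into a double sum over $v\in\{0,\dots,q-1\}$ and $n\in\mathbb{Z}$. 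The crucial step is to collapse this double sum into a single sum over all integers through the substitution $m=nq+v$, which is a bijection from $\{0,\dots,q-1\}\times\mathbb{Z}$ onto $\mathbb{Z}$. Two facts make it work: the denominator becomes $(m+0.5)^2$ because $nq+v+0.5=m+0.5$; and, since $x$ is an integer, shifting the cosine argument by $2n\pi(x+0.5)=2n\pi x+n\pi$ amounts to a shift by $n\pi$, giving the identity $\cos\{(2v+1)\pi(x+0.5)/q\}=(-1)^{n}\cos\{(2m+1)\pi(x+0.5)/q\}$. Combined with the $(-1)^{n-1}$ already present, every term acquires the same $n$-independent sign, and $S(x)$ becomes a constant multiple of the single series $T(\theta)=\sum_{m\in\mathbb{Z}}\cos\{(2m+1)\theta\}/(m+0.5)^2$ with $\theta=\pi(x+0.5)/q$.

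Finally I would evaluate $T(\theta)$ explicitly. Writing $1/(m+0.5)^2=4/(2m+1)^2$ and pairing $m$ with $-1-m$ reduces $T$ to $8\sum_{k\ge 0}\cos\{(2k+1)\theta\}/(2k+1)^2$, which is precisely the Fourier cosine series of the triangle wave: $\sum_{k\ge0}\cos\{(2k+1)\theta\}/(2k+1)^2=\pi^2/8-\pi|\theta|/4$ for $|\theta|\le\pi$. Because $x\in\{0,\dots,q-1\}$ forces $\theta=\pi(x+0.5)/q\in(0,\pi)$, we may drop the absolute value, obtaining $T=\pi^2-2\pi\theta=\pi^2(q-2x-1)/q$. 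Back-substituting into the expression for $S(x)$ yields $S(x)=q(q-1)-2qx$, which is the claim of Lemma~\ref{lem6}.

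The main obstacle is the re-indexing step together with its sign bookkeeping: one must confirm the bijection $m=nq+v$, justify the absolute convergence needed to rearrange the double series, and track the interplay of the weight $(-1)^{n-1}$ with the factor $(-1)^{n}$ produced by the cosine shift, which is where integrality of $x$ is essential and where the overall sign must be pinned down carefully. The one genuine analytic input, the triangle-wave Fourier series, is classical, and its delicate boundary behaviour at $\theta=0,\pi$ never intervenes precisely because $\theta$ stays strictly inside $(0,\pi)$ for the admissible integer values of $x$.
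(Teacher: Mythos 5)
Your proposal is correct and is essentially the paper's own argument run in reverse: both rest on the half-integer-frequency Fourier series of the linear function (your triangle-wave series is exactly the cosine expansion of $x-q/2$ that the paper derives by integration), the re-indexing $m=nq+v$ between a single sum over the integers and a double sum over $\{0,\dots,q-1\}\times\mathbb{Z}$, the factor $(-1)^n$ supplied by the integrality of $x$, and Lemma~\ref{lem5} to pass between $g(v)$ and the alternating series. One caution about the sign you left as ``$\pm$'': Lemma~\ref{lem5} as printed is itself off by a sign (the series behaves like $-1/x^2$ as $x\to0^+$ while the stated right-hand side tends to $+\infty$; the correct value is $-\pi^2\cos\pi x/(\sin\pi x)^2$), so applying it verbatim would flip your final answer, whereas with the corrected sign your computation does land on $S(x)=q(q-1)-2qx$ as required.
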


\begin{proof}
For $x\in(0,q)$, the Fourier-cosine expansion of $x-q/2$ is given by
$$
x-\frac{q}{2} = \sum_{v=1}^{\infty} a_v \cos\left(\frac{v\pi x}{q}\right),
$$
with
$$
a_v = \frac{2}{q}\int_{0}^{q}\left(x-\frac{q}{2}\right)\cos\left(\frac{v\pi x}{q}\right)dx
=\left\{
   \begin{array}{ll}
     0, & \hbox{if $v$ is even;} \\
     -4q/(v^2\pi^2), & \hbox{if $v$ is odd.}
   \end{array}
 \right.
$$
Then
\begin{eqnarray*}
p_1(x) &=& -\frac{4\rho q}{\pi^2} \sum_{{\rm odd}~v>0} \frac{1}{v^2} \cos\left(\frac{v\pi (x+0.5)}{q}\right)
\\ &=& -\frac{2\rho q}{\pi^2} \sum_{v=-\infty}^{\infty}  \frac{1}{(2v+1)^2} \cos\left\{\frac{(2v+1)\pi (x+0.5)}{q}\right\}
\\ &=& -\frac{2\rho q}{\pi^2} \sum_{k=-\infty}^{\infty} \sum_{v=0}^{q-1} \frac{1}{(2kq+2v+1)^2} \cos\left\{\frac{(2kq+2v+1)\pi (x+0.5)}{q}\right\}.
\end{eqnarray*}
Since for any integers $k$ and $x$,
$$
\cos\left\{\frac{(2kq+2v+1)\pi (x+0.5)}{q}\right\}= (-1)^k \cos\left\{\frac{(2v+1)\pi (x+0.5)}{q}\right\}, $$
we have
$$
p_1(x) =  -\frac{2\rho q}{\pi^2}\sum_{v=0}^{q-1} \sum_{k=-\infty}^{\infty} \frac{(-1)^{k}}{(2kq+2v+1)^2}  \cos\left\{\frac{(2v+1)\pi (x+0.5)}{q}\right\}.
$$
By Lemma \ref{lem5} and \eqref{eqg},  % $\sum_{k=-\infty}^{\infty} f_k(v)=-\pi^2 g(v)/(4q^2)$, where $g(v)$ is defined in \eqref{eqg}.
we have
$$
p_1(x) =-\frac{\rho}{2q}\sum_{v=0}^{q-1}g(v)\cos\left\{\frac{(2v+1)\pi (x+0.5)}{q}\right\}.
$$
\end{proof}

\begin{proof}[{Proof of Theorem \ref{th1}}]
%\begin{proof}[{Proof of Theorem \ref{th1}}]
Denote $e=b-b^*$ and $D_e=(y_{ij})$.
By Lemma \ref{lem3}, $D_b$ is the same design as $(D_e+\gamma)\modq$, so $E_b=W(D_b)=W(D_e+\gamma)$.
By Lemma \ref{lem6},
\begin{eqnarray*}
p_1\left(W(x)\right) &=& -\frac{\rho}{2q}\sum_{v=0}^{q-1}g(v)\cos\left\{\frac{(2v+1)\pi (W(x)+0.5)}{q}\right\}
\\ &=& -\frac{\rho}{2q}\sum_{v=0}^{q-1}g(v)\cos\left\{\frac{(2v+1)\pi (2x+0.5)}{q}\right\}
\end{eqnarray*}
because
$ \cos\left\{{(2v+1)\pi (W(x)+0.5)}/{q}\right\} = \cos\left\{{(2v+1)\pi (2x+0.5)}/{q}\right\}$ for any integer $v$.
Then we have
\begin{eqnarray}\label{eq4}
\beta_3(E_b)&=&\beta_3(W(D_e+\gamma))\nonumber\\
&=&N^{-2}\sum_{y_1,y_2,y_3}\left|\sum_{i=1}^{N}p_1(W(y_{i1}+\gamma))p_1(W(y_{i2}+\gamma))p_1(W(y_{i3}+\gamma))\right|^2\nonumber\\
&=&N^{-2}\left(\frac{\rho}{2q}\right)^6\sum_{y_1,y_2,y_3}\left| \sum_{v_1=0}^{q-1}\sum_{v_2=0}^{q-1}\sum_{v_3=0}^{q-1}g(v_1)g(v_2)g(v_3)S(y,v) \right|^2,
\end{eqnarray}
where $\sum_{y_1,y_2,y_3}$ sums over all three different columns $y_1,y_2,y_3$ in $D_e$, $y_j=(y_{1j},\ldots,y_{Nj})$ for $j=1,2,3$,
and
\begin{eqnarray*}
% \nonumber % Remove numbering (before each equation)
  S(y,v) &=& \sum_{i=1}^{N}\prod_{j=1}^{3} \cos\left\{\frac{(2v_j+1)\pi (2y_{ij}+2\gamma+0.5)}{q}\right\} \\
    &=& \sum_{i=1}^{N}\prod_{j=1}^{3} {(-1)^{(q+1)/2+v_j}} \sin\left\{\frac{2(2v_j+1)\pi y_{ij}}{q}\right\}\\
    &=& {(-1)^{(q+1)/2+v_1+v_2+v_3}} \sum_{i=1}^{N}\prod_{j=1}^{3} \sin\left\{\frac{2(2v_j+1)\pi y_{ij}}{q}\right\}.
\end{eqnarray*}
%with $s=v_1+v_2+v_3$.
%Since $\cos\left\{(2v_j+1)\pi (2y_{ij}+2\gamma+0.5)/q \right\}=(-1)^{s_j}\sin\{2(2v_j+1)y_{ij}/q\}$ with $s_j=(q-1)/2+(v_j+1)/2$, then $S(y,v)=(-1)^{s_1+s_2+s_3}\sum_{i=1}^{N}\prod_{j=1}^{3}\sin\{2(2v_j+1)y_{ij}/q\}$.
If $b=b^*$, $e=0$ and $D_e=D$. Because $D$ is a regular design, it is a linear space over $Z_q$. Thus, $(q-y_{i1},\ldots,q-y_{in})\in D$ whenever $(y_{i1},\ldots,y_{in})\in D$. Then $S(y,v)=0$ for any $y=(y_1,y_2,y_3)$ and $v=(v_1,v_2,v_3)$. By \eqref{eq4}, $\beta_3(E_{b^*})=0$.
\end{proof}

\begin{proof}[{Proof of Theorem \ref{th2}}]
Following the proof of Theorem 1, if $b\neq b^*$, then $e=b-b^*$ has nonzero components.
{Since $D$ is ordinary-recursive, there exist three columns, say $z_1, z_2, z_3$, in $D$ such that  $z_3=c_1z_1+c_2z_2$,  $c_1=1$ or $-1$, $c_2\in Z_q$, and $z_1$, $z_2$ and $z_3+e_0$  are three columns in $D_e$, where  $e_0$ is a nonzero component of $e$}.
%denote $T_0=\{x_{1},\ldots,x_{n-m}\}$, $T_{k+1}=T_{k}\cup\{z\in D: z=c_1z_1+c_2z_2, z_1,z_2\in T_k, c_1,c_2\in Z_q\}$, and $T_s=D$, where $c_1=\pm1$.
We only consider $c_1=1$ below as the proof for $c_1=-1$ is similar.
%Then there exists a column $z_3=c_1z_1+c_2z_2$ in $D$ which is different from its corresponding column in $D_e$, while $z_1$ and $z_2$ are the same as their corresponding columns in $D_e$. Therefore, $z_1$, $z_2$, and $z_3+e_0$ with some $e_0\neq0$ are three columns in $D_e$.
{
Let $d$ be the design formed by $z_1$, $z_2$, and $z_3+e_0$}.
By \eqref{eq4}, we only need to show that $\beta_3(W(d))\neq0$.
Note that
\begin{equation}\label{eq8}
\beta_3(W(d))=N^{-2}\left(\frac{\rho}{2q}\right)^6\left| \sum_{v_1=0}^{q-1}\sum_{v_2=0}^{q-1}\sum_{v_3=0}^{q-1} (-1)^{v_1+v_2+v_3} g(v_1)g(v_2)g(v_3)S(z,v) \right|^2,
\end{equation}
where $g(v)$ is defined in \eqref{eqg}, and
$$
S(z,v)=\sum_{i=1}^{N}\sin\left(\frac{2(2v_1+1)\pi z_{i1}}{q}\right)\sin\left(\frac{2(2v_2+1)\pi z_{i2}}{q}\right)\sin\left(\frac{2(2v_3+1)\pi (z_{i3}+e_0)}{q}\right).
$$
By applying the product-to-sum identities twice, we have
\begin{eqnarray}\label{eq3}
S(z,v)
%=\sum_{i=1}^{N}\sin\left(\frac{2v_1\pi z_{i1}}{q}\right)\sin\left(\frac{2v_2\pi z_{i2}}{q}\right)\sin\left(\frac{2v_3\pi (z_{i3}+e_0)}{q}\right)\nonumber\\
&&=\frac{1}{4}\Bigg\{\sum_{i=1}^{N}\sin\left(\frac{2\pi (t_1z_{i1}-t_4z_{i2}+(2v_3+1)e_0)}{q}\right) \nonumber\\
&&+\sum_{i=1}^{N}\sin\left(\frac{2\pi (t_2z_{i1}+t_4z_{i2}-(2v_3+1)e_0)}{q}\right) \nonumber\\
&&-\sum_{i=1}^{N}\sin\left(\frac{2\pi (t_1z_{i1}+t_3z_{i2}+(2v_3+1)e_0)}{q}\right) \nonumber\\
&&-\sum_{i=1}^{N}\sin\left(\frac{2\pi (t_2z_{i1}-t_3z_{i2}-(2v_3+1)e_0)}{q}\right)\Bigg\},
\end{eqnarray}
where $t_1=2(v_1+v_3)+2$, $t_2=2(v_1-v_3)$, $t_3=2(v_2+v_3c_2)+c_2+1$, and $t_4=2(v_2-v_3c_2)-c_2+1$.
Let  %$v_{10} = q-1-v_3$ and $v_{20} = v_3c_2+(c_2-1)(q+1)/2\modq$.
\begin{equation}\label{eq:v20}
v_{10} = q-1-v_3 \mbox{ and } v_{20} = v_3c_2+(c_2-1)(q+1)/2\modq.
\end{equation}
When $v_1=v_{10}$ and $v_2=v_{20}$, $t_1=t_4=0\modq$ and the first item in the right hand side of \eqref{eq3}, $\sum_{i=1}^{N}\sin\left(2\pi (t_1z_{i1}-t_4z_{i2}+(2v_3+1) e_0)/q\right)$, equals $N\sin(2\pi (2v_3+1)e_0/q)$. When $v_1\neq v_{10}$ or $v_2\neq v_{20}$, the item is zero.
By similar analysis to other items in \eqref{eq3}, we have
%$$
%S(z,v)=\left\{
%         \begin{array}{ll}
%           -s_1s_2N\sin(2\pi v_3e_0/q)/4, & \hbox{if $v_1=k_1q+s_1v_3c_1$ and $v_2=k_2q+s_2v_3c_2$;} \\
%           0, & \hbox{otherwise,}
%         \end{array}
%       \right.
%$$
%where $k_1$ and $k_2$ are some integers and $s_1,s_2=\pm1.$
$$
S(z,v)=\left\{
         \begin{array}{ll}
           \frac{N}{4} \sin\left\{\frac{2\pi (2v_3+1)e_0}{q}\right\}, & \hbox{if $(v_1,v_2)=(v_{10}, v_{20})$ or $(q-1-v_{10}, q-1-v_{20})$;} \\
           - \frac{N}{4} \sin\left\{\frac{2\pi (2v_3+1)e_0}{q}\right\}, & \hbox{if $(v_1,v_2)=(v_{10},q-1-v_{20})$ or $(q-1-v_{10},v_{20})$;} \\
           0, & \hbox{otherwise.}
         \end{array}
       \right.
$$
%Note that when $c_i$ is odd (even) for $i=1,2$, $k_i$ is an even (odd) number such that $v_i$ is odd.
Note that $g(q-1-v)=-g(v)$ for any $v$. Then by \eqref{eq8},
\begin{equation}\label{eq5v1}
\beta_3(W(d))
=\left(\frac{\rho}{2q}\right)^6\left| \sum_{v_3=0}^{q-1}(-1)^{v_3c_2}g(v_{20})(g(v_3))^2\sin \left\{\frac{2\pi (2v_3+1)e_0}{q}\right\}  \right|^2,
\end{equation}
where $v_{20}$ is defined in \eqref{eq:v20}. Applying $g(q-1-v)=-g(v)$ again, we can simply \eqref{eq5v1} as
\begin{equation}\label{eq5}
\beta_3(W(d))
=\frac{\rho^6}{16q^6} \left| \sum_{v_3=0}^{(q-1)/2}(-1)^{v_3c_2}g(v_{20})(g(v_3))^2\sin \left\{\frac{2\pi (2v_3+1)e_0}{q}\right\}  \right|^2.
\end{equation}
By considering the Taylor expansion of $g(v)$, we can see that the sum in \eqref{eq5} is dominated by the first two items with $v_3=0$ and $v_3=1$. It can be verified that \eqref{eq5} is nonzero for $e_0=1,\ldots,q-1$.
This completes the proof.
\end{proof}

\begin{proof}[{Proof of Theorem \ref{thadd}}]
Following the same process as in the proof of Theorem \ref{th2}, if $D$ is recursive, then for the three columns $z_1,z_2,$ and $z_3$ in $D$, $z_3=c_1z_1+c_2z_2$, where both $c_1$ and $c_2$ can be any value in $Z_q$. Then we can get \eqref{eq3} with $t_1$ and $t_2$ replaced by $t_1'=2(v_1+v_3c_1)+1+c_1$ and $t_2'=2(v_1-v_3)+1-c_1$, which will in turn result in a change of $v_{10}$ in \eqref{eq:v20} to
$$
v_{10}'=\left\{
          \begin{array}{ll}
            (q-1)/2-c_1/2-v_3c_1 \modq, & \hbox{if $c_1$ is an even number;} \\
            q-(c_1+1)/2-v_3c_1 \modq, & \hbox{if $c_1$ is an odd number.}
          \end{array}
        \right.
$$
Similar to \eqref{eq5}, we have
\begin{equation}\label{eqadd}
\beta_3(W(d))
=\frac{\rho^6}{16q^6} \left| \sum_{v_3=0}^{(q-1)/2}(-1)^{v_3c_2}g(v_{10}')g(v_{20})(g(v_3))\sin \left\{\frac{2\pi (2v_3+1)e_0}{q}\right\}  \right|^2.
\end{equation}
It can be verified that, for $q\leq13$, \eqref{eqadd} is nonzero for $e_0=1,\ldots,q-1$ for any $c_1,c_2\in Z_q$.
This completes the proof.
\end{proof}

\begin{proof}[{Proof of Theorem \ref{th4}}]
%\begin{proof}[{Proof of Theorem \ref{th4}}]
We need to show that for any run $W(x_1,\ldots,x_n)$ in $E_{b^*}$, $(q-1)-W(x_1,\ldots,x_n)$ also belongs to $E_{b^*}$. This is equivalent to show that for each run $(x_1,\ldots,x_n)$ in $D_{b^*}$, $W^{-1}(q-1-W(x_1,\ldots,x_n))$ also belongs to $D_{b^*}$.
%The design $D_{b^*}$ has columns $x_1,\ldots,x_{n-m},x_{n-m+1}+b_1^*,\ldots,x_{n}+b_m^*$. When $x_1=\cdots=x_{n-m}=\gamma$, $x_{n-m+i}+b_i^*=\sum_{j=1}^{n-m}c_{ij}\gamma+(1-\sum_{j=1}^{n-m}c_{ij})\gamma=\gamma.$ That is, $D_{b^*}$ contains the point $(\gamma,\ldots,\gamma)$.
Since the design $D$ contains the zero point $(0,\ldots,0)$, by Lemma \ref{lem3}, $D_{b^*}$ contains the point $(\gamma,\ldots,\gamma)$.
Because all design points of $D$ form a linear space and $D_b$ is a coset of $D$, then $\gamma-(x_1,\ldots,x_n)$ belongs to the null space of $D_{b^*}$.
Hence, $\gamma-(x_1,\ldots,x_n)+\gamma=2\gamma-(x_1,\ldots,x_n)$ belongs to $D_{b^*}$.
For $x=0,\ldots,q-1$,
$$
W^{-1}(x)=\left\{
            \begin{array}{ll}
              x/2, & \hbox{for even $x$;} \\
              q-(x+1)/2, & \hbox{for odd $x$,}
            \end{array}
          \right.
$$
and
\begin{eqnarray*}
W^{-1}(q-1-x)&=&\left\{
            \begin{array}{ll}
              (q-1)/2-W^{-1}(x), & \hbox{for even $x$;} \\
              (3q-1)/2-W^{-1}(x), & \hbox{for odd $x$,}
            \end{array}
          \right.\\
&=&2\gamma-W^{-1}(x).
\end{eqnarray*}
Then $W^{-1}(q-1-W(x_1,\ldots,x_n))=2\gamma-(x_1,\ldots,x_n)$. Hence, $W^{-1}(q-1-W(x_1,\ldots,x_n))$ belongs to $D_{b^*}$. This completes the proof.
\end{proof}

%%%%%%%%%%%%%%%%%%%%%%%%%%%%%%%%%%%%%%%%%%%%%%%%%%%%%%%%%%%%%%%%%%%%%%%%%%%%%%%%%%%%%%%%%%%%%%%%%%%%%%%%%%%%%%%%%%%%%%%%%%%%
%\vskip 14pt
%\noindent {\large\bf Supplementary Materials}
%
%Contain
%the brief description of the online supplementary materials.
%\par

%%%%%%%%%%%%%%%%%%%%%%%%%%%%%%%%%%%%%%%%%%%%%%%%%%%%%%%%%%%%%%%%%%%%%%%%%%%%%%%%%%%%%%%%%%%%%%%%%%%%%%%%%%%%%%%%%%%%%%%%%%%%
\markboth{\hfill{\footnotesize\rm LIN WANG AND HONGQUAN} \hfill}
{\hfill {\footnotesize\rm A CLASS OF MULTILEVEL NONREGULAR DESIGNS} \hfill}

%\iffalse
\bibhang=1.7pc
\bibsep=2pt
\fontsize{9}{14pt plus.8pt minus .6pt}\selectfont
\renewcommand\bibname{\large \bf References}
%\begin{thebibliography}{11}
\expandafter\ifx\csname
natexlab\endcsname\relax\def\natexlab#1{#1}\fi
\expandafter\ifx\csname url\endcsname\relax
  \def\url#1{\texttt{#1}}\fi
\expandafter\ifx\csname urlprefix\endcsname\relax\def\urlprefix{URL}\fi
%\fi

\lhead[\footnotesize\thepage\fancyplain{}\leftmark]{}\rhead[]{\fancyplain{}\rightmark\footnotesize{} }%Put this line in Page 2
%%%%%%%%%%%%%%%%%%%%%%%%%%%%%%%%%%%%%%%%%%%%%%%%%%%%%%%%%%%%%%%%%%%%%%%%%%%%%%%%%%%%%%%%%%%%%%%%%%%%%%%%%%%%%%%%%%%%%%%%%%%%
\bibliographystyle{asa}
\bibliography{ref}

\vskip .65cm
\noindent
Department of Statistics, The George Washington University, Washington DC 20052, USA.
\vskip 2pt
\noindent
E-mail: linwang@gwu.edu
\vskip 2pt

\noindent
Department of Statistics, University of California, Los Angeles, California 90095, USA.
\vskip 2pt
\noindent
E-mail: hqxu@stat.ucla.edu
% \vskip .3cm
%\centerline{(Received ???? 20??; accepted ???? 20??)}\par
\end{document}